
\documentclass[10pt,journal,compsoc]{IEEEtran}
%


\usepackage{booktabs} 
\usepackage{acronym}
\usepackage{framed}
\usepackage{subfigure}
\usepackage{multirow}
\usepackage{tabularx}
\usepackage{algpseudocode}
\usepackage{float}
\usepackage[pdftex]{graphicx}

%
\ifCLASSOPTIONcompsoc
  \usepackage[nocompress]{cite}
\else
  \usepackage{cite}
\fi
%

%
\usepackage{amsmath}
\usepackage{amsthm}
%
\interdisplaylinepenalty=2500
\usepackage{url}


\hyphenation{op-tical net-works semi-conduc-tor}

\begin{document}
%
\title{On the costs and profit of software defect prediction}
%
%
%
%

\author{Steffen~Herbold
\IEEEcompsocitemizethanks{\IEEEcompsocthanksitem S. Herbold is with the
University of Goettingen, Institute of Computer Science, 
Goettingen, Germany.\protect\\
E-mail: herbold@cs.uni-goettingen.de}
\thanks{}}

%
%

\markboth{IEEE Transactions on Software Engineering,~Vol.~X, No.~X,
MONTH YEAR}%
{Herbold: On the costs and profit of software defect prediction}
%



\newcommand{\etal}{~\textit{et al.}}
\newcommand{\AUC}{\textit{AUC}}
\newcommand{\FMEAS}{\textit{F-measure}}
\newcommand{\GMEAS}{\textit{G-measure}}
\newcommand{\MCC}{\textit{MCC}}
\newcommand{\AUCEC}{\textit{AUCEC}}
\newcommand{\RELB}{$RelB_{20\%}$}
\newcommand{\NECM}{$NECM_{15}$}
\newcommand{\RANKSCORE}{\textit{rankscore}}
\newcommand{\RANKSCORES}{\textit{rankscores}}
\newcommand{\RECALL}{\textit{recall}}
\newcommand{\PRECISION}{\textit{precision}}
\newcommand{\ERROR}{\textit{error}}

\newtheorem{definition}{Definition}
\newtheorem{notation}{Notation}
\newtheorem{theorem}{Theorem}
\newtheorem{lemma}{Lemma}
\newtheorem{corollary}{Corollary}
\acrodef{ANOVA}{ANalysis Of VAriance}
\acrodef{AST}{Abstract Syntax Tree}
\acrodef{AUC}{Area Under the ROC Curve}
\acrodef{Ca}{Afferent Coupling}
\acrodef{CBO}{Coupling Between Objects}
\acrodef{CCA}{Canonical Correlation Analysis}
\acrodef{CD}{Critical Distance}
\acrodef{Ce}{Efferent Coupling}
\acrodef{CFS}{Correlation-based Feature Subset}
\acrodef{CLA}{Clustering and LAbeling}
\acrodef{CODEP}{COmbined DEfect Predictor}
\acrodef{CPDP}{Cross-Project Defect Prediction}
\acrodef{DBSCAN}{Density-Based Spatial Clustering}
\acrodef{DCV}{Dataset Characteristic Vector}
\acrodef{DTB}{Double Transfer Boosting}
\acrodef{fn}{false negative}
\acrodef{fp}{false positive}
\acrodef{GB}{GigaByte}
\acrodef{HL}{Hosmer-Lemeshow}
\acrodef{ITS}{Issue Tracking System}
\acrodef{JIT}{Just In Time}
\acrodef{LCOM}{Lack of COhession between Methods}
\acrodef{LOC}{Lines Of Code}
\acrodef{MDP}{Metrics Data Program}
\acrodef{MI}{Metric and Instances selection}
\acrodef{MODEP}{MultiObjective DEfect Predictor}
\acrodef{MPDP}{Mixed-Project Defect Prediction}
\acrodef{NN}{Nearest Neighbor}
\acrodef{PCA}{Principle Component Analysis}
\acrodef{RAM}{Random Access Memory}
\acrodef{RFC}{Response For a Class}
\acrodef{SCM}{SourceCode Management system}
\acrodef{SVM}{Support Vector Machine}
\acrodef{TCA}{Transfer Component Analysis}
\acrodef{tn}{true negative}
\acrodef{tp}{true positive}
\acrodef{RBF}{Radial Basis Function}
\acrodef{ROC}{Receiver Operating Characteristic}
\acrodef{UMR}{Unified Metric Representation}
\acrodef{VCB}{Value-Cognitive Boosting}
\acrodef{WPDP}{Within-Project Defect Prediction}
\acrodef{QA}{Quality Assurance}


\IEEEtitleabstractindextext{%
\begin{abstract}
Defect prediction can be a powerful tool to guide the use of quality assurance
resources. However, while lots of research covered methods for defect
prediction as well as methodological aspects of defect prediction research, the
actual cost saving potential of defect prediction is still unclear. Within this
article, we close this research gap and formulate a cost model for software
defect prediction. We derive mathematically provable boundary conditions that must
be fulfilled by defect prediction models such that there is a positive
profit when the defect prediction model is used. Our cost model includes aspects
like the costs for quality assurance, the costs of post-release defects, the possibility
that quality assurance fails to reveal predicted defects, and the relationship
between software artifacts and defects. We initialize the cost model using
different assumptions, perform experiments to show trends of the behavior of
costs on real projects. Our results show that the unrealistic assumption that
defects only affect a single software artifact, which is a standard practice in
the defect prediction literature, leads to inaccurate cost estimations.
Moreover, the results indicate that thresholds for machine learning metrics
are also not suited to define success criteria for software defect prediction. 
\end{abstract}

\begin{IEEEkeywords}
Defect prediction, costs, return on investment
\end{IEEEkeywords}}

\maketitle

\IEEEdisplaynontitleabstractindextext

%
\IEEEpeerreviewmaketitle

\acresetall
\IEEEraisesectionheading{\section{Introduction}}
\label{sec:introduction}

\IEEEPARstart{R}{esearch} regarding software defect prediction for the accurate
prediction of post-release defects in software is an ongoing and still
unresolved research topic, that was already discussed in hundreds of
publications \cite{Catal2009a,Hall2012,Hosseini2017a}. Current research focuses
on problems like cross-project defect prediction~(e.g., \cite{Herbold2017b}),
heterogeneous defect prediction~(e.g., \cite{Nam2017, Jing2015}), unsupervised
defect prediction~(e.g., \cite{Nam2015b,Zhang2016}), and just-in-time defect
prediction (e.g., \cite{Huang2017}). Additionally, researchers have turned their
attention to how defect prediction research should be conducted, e.g., reducing
the bias through sampling approaches~\cite{Tantithamthavorn2017}, the impact of
hyper parameter tuning~\cite{Tantithamthavorn2016}, suitable baseline
comparisons~\cite{Krishna2018} or general guidelines that should be
considered~\cite{Tantithamthavorn2018}. While all of the above contribute to the
advancement of the defect prediction state of the art, there are also multiple
publications that question the progress of the state of the art through
replications in recent years, as they demonstrate that older
(e.g.,~\cite{Herbold2017b}) or trivial (e.g.,~\cite{Yang2016, Zhou2018})
approaches are comparable too or even better than more complex recent approaches
from the state of the art.

The problems with replications of prior results lead to the question, if defect prediction can currently really help
developers and organizations to reduce costs. While this aspect
is crucial, there are only few publications that cover cost models for defect
prediction (see Section~\ref{sec:relatedwork}). Moreover, the existing cost
models have several limitations, e.g., a 1-to-1 relationship between software artifacts and defects, or the assumption
that quality assurance is perfect, i.e., all predicted defects are found.
Additionally, researchers often use of ``standard'' machine learning
metrics like \textit{precision}, \textit{recall}, \textit{F-Measure}, \textit{MCC}, \textit{AUC} and others instead of a cost
model~\cite{Hosseini2017a, Herbold2017}.
While these metrics are suitable to estimate the general performance of defect
prediction models, they are not suitable to answer the question if defect
prediction can actually save costs, i.e., have a positive profit.

With this article, we want to close this research gap through the specification
of a general cost model for defect prediction. Our cost model takes the costs
for quality assurance, the costs of defects, the relationship between defects
and software artifacts, the possibility that quality assurance may fail to
reveal defects as well as one-time and continuous costs for the execution of
the defect prediction into account. 


The contributions of this article are the following: 
\begin{itemize}
  \item A general cost model for software defect prediction.
  \item Mathematically proven boundary conditions on cost saving defect
  predictions.
  \item Initializations of the cost model with realistic assumptions that can be
  used by researchers and practitioners for the evaluation of defect prediction
  models, including guidelines on how to use the cost model.
  \item Through the work on the cost model, we discovered a principle problem
  in current defect prediction papers, i.e., that we do not account for the
  fact that there is an $n$-to-$m$ relationship between software artifacts and
  defects. Through simulations of defect prediction models on real-world data,
  we have shown that the results, especially in terms of costs, may change if
  this relationship is considered.
\end{itemize}

The remainder of this article is structured as follows. In
Section~\ref{sec:relatedwork} we discuss existing cost models and cost-sensitive
metrics for software defect prediction. Then, we formally specify the problem of
software defect prediction in Section~\ref{sec:formal-specification} and derive
the general cost model for defect prediction from this specification in
Section~\ref{sec:general-cost-model}. In Section~\ref{sec:conditions} we proof
properties that must be fulfilled by defect prediction models in order to have a
positive profit. Then, we show how the general cost model can be initialized
under different assumptions in Section~\ref{sec:init}. Through simulation
experiments, we evaluate trends of the boundary conditions for cost saving
defect predictions, as well as the impact
of different assumptions on the initialization of the cost model in
Section~\ref{sec:experiments}. We proceed with a discussion of our cost model,
including guidelines on how to use our model and threats to the validity of our
work in Section~\ref{sec:discussion} and conclude in
Section~\ref{sec:conclusion}.

\section{Related Work}
\label{sec:relatedwork}

Most of the defect prediction literature does not use cost-sensitive
evaluations, but standard machine learning measures based on the confusion
matrix, e.g., \textit{precision}, \textit{recall}, \textit{F-Measure}, and
\textit{MCC}. Within this section, we discuss approaches for the
evaluation of defect prediction models that directly take costs into account. We differentiate
between cost metrics and cost models. A cost metric takes specific parts of the
costs into account, but does not try to actually estimate the complete costs
related to the defect prediction model. A cost model combines multiple or all
relevant aspects associated with the costs of a defect prediction model. As a
consequence, cost metrics are only indicators of cost effectiveness, whereas
cost models can be used to calculate the costs as well as the profit of
defect prediction models. After the discussion of costs models for defect
prediction, we present a broader overview about related work on cost modeling,
both in software engineering as well as other domains.

\subsection{Cost Metrics}

There are also multiple performance metrics in the literature, which take costs
into account. Ohlson and Alberg~\cite{Ohlsson1996}, as well as
Rahman\etal~\cite{Rahman2012} defined variants of \ac{ROC} curves that take costs into account. These
\ac{ROC} curves are defined over the number of defects found versus the number
of artifacts that have to be considered. Using the area under this \ac{ROC}
curve, a threshold independent and cost sensitive performance measure is
defined. The variant by Ohlson and Alborg is also
known as lift chart~\cite{Jiang2008}. Rahman\etal~\cite{Rahman2012} also
considered using the percentage of lines of code instead of the number of
artifacts. However, the results are similar. Hemmati\etal~\cite{Hemmati2015} use
a similar variant of \ac{ROC} that considers the percentage of defects detected
versus the percentage of lines of code. Arisholm and Briand~\cite{Arisholm2006}
propose an approach similar to a ROC curve. They plot the percentage of
defects found, as well as the percentage of code considerd both on the y-axis
versus different cutoff values for a prediction model on the x-axis. The area
between these lines indicates the cost saving potential.

Canfora\etal~\cite{Canfora2013} use costs defined by the lines of code
that are predicted as defective as a criterion for a multi-objective
optimization algorithm for cross-project defect prediction. Another measure that is
considered is the number of modules that must be visited before 80\% of the
defects are found (e.g., \cite{Jureczko2010}). Similarly, some authors
considered the number of defects found if the top twenty percent of the
predictions are considered (e.g., \cite{Zhang2015a}), i.e., the predictions
with the highest scores.


\subsection{Cost Models}

A cost model with similar traits to our work was proposed by
Khoshgoftaar and Allen~\cite{Khoshgoftaar1998}. In their work, the authors
observe that the costs for false positives and false negatives are different.
They model the expected costs of misclassifications through two
constants that represent the costs of unnecessary quality assurance in case of false positives and the costs
of missed defects in case of false negatives. Later, this approach was extended
to consider these costs as a ratio~\cite{Khoshgoftaar2004}. Drummond and
Holte~\cite{Drummond2006} propose the use of cost curves for classifier comparisons.
The cost curves consider the same cost model as Khoshgoftaar and
Allen~\cite{Khoshgoftaar1998}, i.e. the expected costs of misclassifications.
However, instead of assuming a single constant, they propose to use a \ac{ROC}
curve of the expected costs versus different cost ratios.

Another cost model for defect prediction was proposed by Zhang and
Cheung~\cite{Zhang2013}. The model is also similar to the work by Khoshgoftaar
and Allen~\cite{Khoshgoftaar1998}. However, they also took the costs for true
positives into account, i.e., the costs of necessary quality assurance that can prevent post release defects.
Based on this model, the authors derive a criterion for cost-effectiveness of
defect prediction that must be fulfilled such that defect prediction outperforms
randomly selecting artifacts for quality assurance as well as applying quality
assurance to all artifacts.

While our cost model has the same general structure as the models by
Khoshgoftaar et al.~\cite{Khoshgoftaar1998, Khoshgoftaar2004} and Zhang and
Cheung~\cite{Zhang2013}, both may only be considered as a special case of our
approach as they have several limitations which our model overcomes.
They assume 1-to-1 relationships between software artifacts and defects, i.e.,
binary defect labels for the artifacts. This is unrealistic as software
artifacts may contain multiple defects and defects may affect multiple software
artifacts. Moreover, both models do not take into account that quality
assurance is not perfect and may not be able to detect predicted defects. Additionally,
the models assume constant costs for both quality assurance as well as defects.
Our general model allows individual costs for each artifact and each defect and
we cover the constant costs as a special case we consider. Finally, both
cost models ignore one-time and continuous costs that occur if a defect prediction model is
used within a development process. The cost curves by Drummond and
Holte~\cite{Drummond2006} could also be used with our cost model, to visualize the cost
savings for different cost ratios.

\subsection{Cost Model for Other Applications}

While this article is focused on cost models for defect prediction, we also want
to give a brief overview on cost models for other applications. In software engineering,
similar cost models to our work were proposed for the reliability assessment of
software with the goal to determine the time costs of releases~\cite{Pham2003}.
Such models estimate the costs based on the costs for quality assurance and for fixing
defects prior to a release in comparison to the costs due to fixing defects
after the release and the costs for delaying a release. The reliability models use
stochastic processes to model the expected number of defects in the
software, e.g., a non-homogeneous Poisson process~\cite{Kingman1992}. Such a
stochastic process is not required for our cost model because the defects are
known from empirical data. Similar to our work, the cost models for software
reliability use constants for the costs of quality assurance and due to
defects. The only cost factor that is not assumed as constant are the penalty
costs for delaying a release, which are modelled as a function that
monotonically increases with the delay.

Stolfo\etal~\cite{Stolfo2000} created a cost model for the evaluation of machine
learning based fraud detection. While the domain is different, the goal is
similar to our work for defect prediction, i.e., to provide means beyond
standard machine learning metrics to assess the impact of a prediction approach
on costs. The assumptions behind the structure of the cost model are similar to
defect prediction cost models. The authors compute the cost savings based on the
effort spent due to predictions, costs saved due to true positive predictions of
fraud, and false negative misses of fraud. The costs for effort spent is similar
to the quality assurance costs in defect prediction cost models and assumed by
Stolfo\etal~\cite{Stolfo2000} to be constant. The costs for detecting or missing
fraud is similar to the costs of defects. A big advantage of the fraud detection
cost model over our initializations of the cost model is that the authors know
the loss due to fraud, because this is the amount of money in a fraudulent
transaction. In comparison, we have to rely on a constant that represents the
mean costs per defect.

In general, the literature on cost models follows a pattern for the creation of
cost models similar to our work: the authors determine factors related to the
costs from the literature and/or experience and create a cost model as the sum
of these cost factors. For example, Patry\etal~\cite{Patry2015} use this
approach to assess the cost of lithium-ion battery cells, Etkin~\cite{Etkin2004}
assess different factors of costs associated with oil spillages, and
Pugliatti\etal~\cite{Pugliatti2007} for the cost of epilepsy in Europe. We note
that the complexity of the cost models is also impacted by the amount of
research invested into understanding different cost factors in detail. For
example, decades of research on economics have led to complex cost models
that can model whole economies by describing different actors through
well-understood stochastic processes. An example for such a model is the work by
Nakumura and Steinsson~\cite{Nakamura2010}: the authors created a cost model
that takes household consumptions of different goods over time as well as the
labor and production behavior of companies into account to analyze the impact of
economic shocks on money non-neutrality.

\section{Specification of defect prediction}
\label{sec:formal-specification}

Defect prediction models are used to assess the risk of software artifacts and
guide quality assurance efforts in order to prevent post-release defects. Formally, 
let $S$ be a software product that consists of artifacts $s \in S$. These
artifacts may be modules, files, classes or methods. The software product
contains defects $d \in D$, whereas each defect belongs to one or more
artifacts.
We denote the artifacts that $d$ belongs to as a set and define $d =
\{s^d_1,, \ldots, s^d_n\}$ to denote that the artifacts $s^d_1, \ldots s^d_n \in S$ are
defective because of defect $d$. Vice versa, we denote the defects that affect
an artifact $s$ as $d(s) = \{d \in D: s \in d\}$. Since one defect can belong to
multiple artifacts, and artifacts can be affected by multiple defects, we have
an $n$ to $|d(s)|=m$ relationship between artifacts and defects. 
Given the defects $D$, we can divide the artifacts $S$ into two disjunctive
sets $S_{DEF}$ and $S_{CLEAN}$ such that $S_{DEF}$ contains all artifacts that
contain defects and $S_{CLEAN}$ contains all artifacts without any defects,
i.e., 
\begin{equation}
\begin{split}
S_{DEF} &= \bigcup_{d \in D} d \\
S_{CLEAN} &= S \setminus S_{DEF}.
\end{split}
\end{equation}

The goal of defect prediction models is to estimate $S_{DEF}$ and
$S_{CLEAN}$. Thus, a defect prediction model is a function
\begin{equation}
\label{eq:defect-prediction-model}
h: S \rightarrow \{Defective, Clean\}.
\end{equation}
In the following, we use the labels $1 = Defective$ and $0 = Clean$, i.e., the
prediction model classifies artifacts into clean and defective artifacts. With
the exception of a publication by Hemmati\etal~\cite{Hemmati2015}, the current
state of the art assumes each artifact that is correctly labeled as defective,
predicts all defects in that affect an artifact correctly. However, this is not
necessarily the case, because a defect $d \in D$ may affect multiple artifacts.
A defect $d$ is successfully predicted by a defect prediction model if all
artifacts $s \in d$ are labeled as defective, i.e., $h(s) = 1$ for all $s \in
d$. We define the set of
defects that are successfully predicted by a defect prediction model $h$ as
\begin{equation}
D_{PRED} = \{d \in D: \forall s \in d~|~h(s) = 1\}
\end{equation}
and the set of defects that are missed as
\begin{equation}
D_{MISS} = \{d \in D: \exists s \in d~|~h(s)=0\} = D \setminus D_{PRED}
\end{equation}

To get a better understanding of our definitions, consider an example with
three software artifacts $S = \{s_1, \ldots, s_3\}$ and two defects $D =
\{d_1=(s_1), d_2=(s_1, s_2)\}$. Thus, $s_1$ is affected by both defects, $s_2$ is only
affected by defect $d_2$ and $s_3$ is clean. Let us consider a defect
prediction model that predicts $s_1$ as defective and the other artifacts as
clean, i.e., $h(s_1)=1$, $h(s_2)=0$, and $h(s_3)=0$. This means that
$D_{PRED}=\{d_1\}$ because all artifacts that $d_1$ affects are predicted as
defective by $h$ and $D_{MISS}=\{d_2\}$ because the artifact $s_2$ that is
affected by $d_2$ is predicted as clean. 

Depending on the prediction of the model and the actual post-release defects
that are observed in the software, there are four possible outcomes of a
prediction.
\begin{enumerate}
  \item The defect prediction model predicts a post-release defect in an
  artifact correctly. This is called a true postive. We denote the
  number of true positives as $tp$.
  \item The defect prediction model falsely predicts a post-release defect in an
  artifact. This is called a false positive. We denote the number of false
  positives as $fp$.
  \item The defect prediction model correctly predicts that an artifact does not
  contain a post-release defect. This is called true negative. We denote
  the number of true negatives as $tn$.
  \item The defect prediction model falsely predicts that an artifact does not
  contain a post-release defect. This is called a false negative. We denote the
  number of false negatives as $fn$.
\end{enumerate}

\section{General Cost Model}
\label{sec:general-cost-model}

The use of defect prediction models affects several costs in a development
process. In general, we must account for the following factors. 
 
\begin{itemize}
  \item $cost_{INIT}$, i.e., one-time costs for the introduction
  of defect prediction into a development process. 
  \item $cost_{EXEC}$, i.e., continuous costs related to the usage of the defect
  prediction model in the development process, e.g, for the
  preparation of data, analysis of prediction results, or the re-training of
  models. 
  \item $cost_{QA}$, i.e., costs due to additional quality assurance
  measures that are applied as a result of the predictions made by the model.
  \item $cost_{DEF}$, i.e., the costs due to post-release defects. These costs
  include not only the directly incurring costs, e.g., due to a loss in revenue or
  contract penalties, but also the maintenance costs for deploying patches in
  the wild, the costs of regression testing, or costs due to an increased effort
  for the correction due to restrictions on the allowed changes to the source
  code after the initial release.
\end{itemize}

The costs for actually fixing the defects in the artifacts are
not a relevant factor for the costs of defect prediction. These costs either
occur as a result of quality assurance before the release, or due to a
post-release defect after the release. Thus, the costs for fixing a defect will
always be present and cannot be changed due to the defect prediction, only
losses due to post-release defects can be prevented. 

If all the costs are known, we can calculate the costs associated with acting
according to a defect prediction model as
\begin{equation}
\label{eq:cost1}
cost = cost_{INIT}+cost_{EXEC}+cost_{QA}+cost_{DEF}.
\end{equation} 

\subsection{One-time costs and continuous costs}
\label{sec:one-time-costs}

The one-time costs and continuous costs account for the investment required to
establish and execute the defect prediction model within a development process.
Both the one-time costs and the continuous costs depend on the current
development process in an organization, the desired defect prediction model, and
how the defect prediction is integrated into the development process.
Relevant factors of the current development process are, e.g., the availability
of links from work done in a version control system for the source to the issue
tracking system in use, as well as established procedures for the static
analysis of the software product. The factors that influence the resulting costs
of a defect prediction model are similar to those for any change in the
development process, e.g., tool costs, training costs, or migration costs.
Moreover, there is a trade-off between one-time costs and continuous costs:
high one-time costs can reduce the continuous costs. E.g., if a lot of money
is spent to create a defect prediction tool that runs fully automated, including
aspects like retraining and performance reports, the continuous costs are
relatively low in comparison to manual retraining and performance reporting.
Regardless, the estimation of the continuous costs should also account for
risks, e.g., additional training costs due to developer turnover.
Underestimating the risks may result in a too conservative estimate of the
continuous costs and, therefore, may require the re-estimation of the continuous
costs and a subsequent re-evaluation of the cost effectiveness of the defect
prediction model.

This already shows that the one-time and continuous costs are highly dependent
on the tool market: if off-the-shelf tools for defect prediction are available, the
costs are constant and depend on the licensing model, training costs of the
tool provider, as well es potential prior experience with tools by developers.

Moreover, the one-time costs and continuous costs are considered independently
from the quality assurance costs and costs of defects. We assume that the
one-time and continuous costs are constants that represent the mean of the
expected costs for establishing and executing the defect prediction model in an organization, i.e.,

\begin{equation}
\label{eq:cost-init}
cost_{INIT} = C_{INIT}
\end{equation}
and
\begin{equation}
\label{eq:cost-exec}
cost_{EXEC} = C_{EXEC}.
\end{equation}
 
%

\subsection{Quality assurance costs}
\label{sec:qa-costs}

These are the costs that result from the prediction of the defect prediction
model, i.e., that result from acting upon defective predictions by the model. 
The costs for the quality assurance measures depend on the techniques for
quality assurance (e.g., code reviews, software tests). Moreover, the quality assurance
costs may depend on the artifacts themselves and may vary, e.g., due to the
artifact size or complexity. The estimate of these costs should take the experience
of developers into account and, therefore, may need to be adopted in case of
developer turnover. We denote these costs as $qa(s)$. These costs occur
for all artifacts that are predicted as defective, i.e.,

\begin{equation}
\begin{split}
\label{eq:cost-qa1}
cost_{QA} &= \sum_{s \in S: h(s) = 1} qa(s).
\end{split}
\end{equation}

We model these costs in relation to the artifacts $S$ and not the
defects $D$, because defect prediction models also label artifacts, instead of
identifying concrete defects which may be related to a set of artifacts.

\subsection{Defect costs}
\label{sec:defect-costs}

The last component are the costs due to post-release defects $d$,
i.e., defects that are not found by the quality assurance and make it into the
wild. We denote the costs of each defect as $loss(d)$. The costs due to
post-release defects consist of two parts. The first part are the costs due to
defects, which are not found because they are not predicted by the model, i.e.,
$d \in D_{MISS}$. The second part of the costs are due to the imperfection of
quality assurance. Even if we predict an artifact as defective
and apply quality assurance measures, there is no guarantee that a defect is
actually found. We model this chance that quality assurance fails as $qf(d)$
where $qf(d) \in [0,1)$ is the expected value that the defect is not found by
the quality assurance. We assume that $qf(d) < 1$, i.e., that the quality
assurance has a chance to uncover all defects. The expected costs due to
defects that are missed by the quality assurance are $qf(d) \cdot loss(d)$ for
all artifacts $d \in D_{PRED}$. The expected costs due to post release defects
are

\begin{equation}
\begin{split}
\label{eq:cost-def1}
cost_{DEF} &= \sum_{d \in D_{MISS}} loss(d)
\\&+ \sum_{d \in D_{PRED}} qf(d) \cdot loss(d).
\end{split}
\end{equation}

\subsection{Complete general cost model}

If we use our definitions from equations
\eqref{eq:cost-init}-\eqref{eq:cost-def1} within equation \eqref{eq:cost1}, we
get

\begin{equation}
\label{eq:cost2}
\begin{split}
cost &= C_{INIT} + C_{EXEC} 
+ \sum_{s \in S: h(s) = 1} qa(s)
\\&+ \sum_{d \in D_{MISS}} loss(d)
+ \sum_{d \in D_{PRED}} qf(d) \cdot loss(d).
\end{split}
\end{equation}
 
\section{Conditions for Cost-Saving Defect Prediction}
\label{sec:conditions}

A still unanswered question in defect prediction research is what it means for a
defect prediction model to be good and when defect prediction is actually
successful. To our mind a defect prediction is successful, if it saves costs, i.e., has a
positive profit. Whether a concrete defect prediction model is cost saving,
depends not only on the quality of the predictions, but also on the actual costs
of defects and the costs for quality assurance. These costs depend on the
project context. Therefore, general statements whether defect prediction models
are cost saving or not are impossible. However, if we assume that the costs for defects are a
known constant $C$, we can proof boundary conditions on $C$ that must be
fulfilled in order for the defect prediction model to have a positive profit.
That constant costs $C$ for defects are not unreasonable for practical purposes
is discussed in Section~\ref{sec:cost-defects}. Within this section, we derive boundary conditions on the costs
of defects $C$. 

With Theorem~\ref{thm:boundaries-random} we specify boundary
conditions that must be fulfilled to have a positive profit in comparison to
randomly applying quality assurance to artifacts with probability $p_{qa}$.
Theorem~\ref{thm:boundaries-random} specifies the boundary
conditions using the fraction of the sum of the costs saved due to predicted
defects ($\sum_{d \in D_{PRED}} (qf(d)-1)$) and the costs saved if defects are
randomly predicted correctly ($\sum_{d \in D} p_{qa}^{|d|} \cdot (1-qf(d))$) as
denominator and the costs for quality assurance for randomly predicted artifacts
($\sum_{s \in S} p_{qa} \cdot qa(s)$) minus the costs of quality assurance
due to a defect prediction model ($\sum_{s \in S: h(s) = 1} qa(s) - C_{INIT} -
C_{EXEC}$) as nominator. Depending on whether this ratio is positive or
negative, Theorem~\ref{thm:boundaries-random} defines an upper, respectively
lower boundary on the costs of defects.

\begin{theorem}\label{thm:boundaries-random}
Let $S$ be the artifacts of a software product with post-release defects $D$,
$h: S \to \{0,1\}$ a defect prediction model, $qa(s)$ the costs for the quality
assurance of artifact $s \in S$, $qf(d)$ the expected value that quality
assurances misses a defect $d$, $loss(d) = C$ the costs of a defect.
Furthermore, let $p_{qa}$ be the probability that quality assurance is applied
to an artifact randomly.

Let 
\begin{equation}
\begin{split}
\label{eq:x}
x = \sum_{d \in D_{PRED}} (qf(d)-1)
+\sum_{d \in D} p_{qa}^{|d|} \cdot (1-qf(d))
\end{split}
\end{equation}

\begin{equation}
\label{eq:y}
y = \sum_{s \in S} p_{qa} \cdot qa(s) - \sum_{s \in S: h(s) = 1} qa(s) -
C_{INIT} - C_{EXEC}
\end{equation}

The defect prediction model $h$ has an expected positive profit in comparison
to the random selection of artifacts with probability $p_{qa}$, if

\begin{equation}
\label{eq:boundary-random-x>0}
C < \frac{y}{x}~\text{if}~x>0 
\end{equation}
respectively 
\begin{equation}
\label{eq:boundary-random-x<0}
C > \frac{y}{x}~\text{if}~x<0
\end{equation}
\end{theorem}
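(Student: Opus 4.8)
The plan is to write down the expected cost incurred by the defect prediction model $h$ and the expected cost incurred by the baseline that applies quality assurance to each artifact independently with probability $p_{qa}$, and then to rearrange the profitability requirement $cost_h < cost_{rand}$ into the stated form.

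For $h$, I would instantiate the general cost model \eqref{eq:cost2} with $loss(d)=C$. Since $D_{PRED}$ and $D_{MISS}$ partition $D$, the defect term simplifies as $C\,|D_{MISS}| + C\sum_{d\in D_{PRED}}qf(d) = C\,|D| + C\sum_{d\in D_{PRED}}(qf(d)-1)$, so that
\begin{equation*}
\begin{split}
cost_h = {}&C_{INIT} + C_{EXEC} + \sum_{s\in S: h(s)=1} qa(s)\\
&{}+ C\,|D| + C\sum_{d\in D_{PRED}}(qf(d)-1).
\end{split}
\end{equation*}

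For the baseline, I would argue as follows. Random selection needs no prediction model, so it carries no $C_{INIT}$ or $C_{EXEC}$ term, and by linearity of expectation its quality assurance cost is $\sum_{s\in S}p_{qa}\cdot qa(s)$. The key step -- and the one that genuinely uses the $n$-to-$m$ relationship between artifacts and defects -- is to observe that a defect $d$ is ``covered'' by the random selection, i.e.\ plays the role of a successfully predicted defect, exactly when all $|d|$ of its artifacts are selected for quality assurance, an event of probability $p_{qa}^{|d|}$ under independent selection; otherwise $d$ is a missed defect costing $C$ in full. Hence $d$ contributes expected cost $(1-p_{qa}^{|d|})\,C + p_{qa}^{|d|}\,qf(d)\,C = C\bigl(1-p_{qa}^{|d|}(1-qf(d))\bigr)$, and summing over $D$ gives
\begin{equation*}
\begin{split}
cost_{rand} = {}&\sum_{s\in S}p_{qa}\cdot qa(s) + C\,|D|\\
&{}- C\sum_{d\in D}p_{qa}^{|d|}(1-qf(d)).
\end{split}
\end{equation*}

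Finally, $h$ has positive expected profit precisely when $cost_h < cost_{rand}$. Subtracting, the $C\,|D|$ terms cancel; moving the terms proportional to $C$ to the left and the rest to the right produces exactly $C\cdot x < y$ with $x$ and $y$ as in \eqref{eq:x} and \eqref{eq:y}. Dividing by $x$ then gives $C < y/x$ when $x>0$ and, because dividing an inequality by a negative number reverses it, $C > y/x$ when $x<0$, which is the claim. I do not expect a genuine obstacle here: the only place requiring care is the probabilistic bookkeeping for the baseline -- getting the covering probability $p_{qa}^{|d|}$ right rather than treating each artifact of a defect in isolation, making the independence assumption across artifact selections explicit, and remembering not to charge the baseline with the one-time and continuous costs. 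The degenerate case $x=0$, in which profitability is independent of $C$ and no finite threshold exists, falls outside the two cases of the statement and is silently excluded.
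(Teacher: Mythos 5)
Your proposal is correct and follows essentially the same route as the paper: both compute the expected cost of the random baseline (with the covering probability $p_{qa}^{|d|}$ for a defect spanning $|d|$ artifacts and no $C_{INIT}$/$C_{EXEC}$ charge), set $cost_h < cost_{rand}$, isolate the terms proportional to $C$ to obtain $C\cdot x < y$, and divide by $x$ with the sign caveat. Your early simplification of the defect term to $C\,|D| + C\sum_{d\in D_{PRED}}(qf(d)-1)$ is just a slightly tidier bookkeeping of the same cancellation the paper performs when it splits $\sum_{d\in D}1$ over $D_{MISS}$ and $D_{PRED}$, and your explicit remarks on independence and the excluded case $x=0$ are welcome but not substantive departures.
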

\begin{proof}
To proof the boundaries on C, we analyze the expected profit of a defect
prediction model in comparison to randomly selecting artifacts, which is
\begin{equation}
profit = cost_{random} - cost
\end{equation}
where is $cost$ as defined in Equation~\eqref{eq:cost2} and $cost_{random}$ are
the costs of defect prediction if quality assurance is applied randomly with
probability $p_{qa}$. We get the $cost_{random}$ by applying the preconditions
of this theorem to Equation~\eqref{eq:cost2} and get
\begin{equation}
\label{eq:cost-random}
\begin{split}
cost_{random} &= \sum_{s \in S}p_{qa} \cdot qa(s)
+ \sum_{d \in D} (1-p_{qa}^{|d|}) \cdot loss(d) 
\\&+ \sum_{d \in D} p_{qa}^{|d|} \cdot qf(d) \cdot loss(d).
\end{split}
\end{equation}
Thus, we have
\begin{equation}
\begin{split}
profit &= \sum_{s \in S}p_{qa} \cdot qa(s)
+ \sum_{d \in D} (1-p_{qa}^{|d|}) \cdot loss(d) 
\\&+ \sum_{d \in D} p_{qa}^{|d|} \cdot qf(d) \cdot loss(d)
\\&- C_{INIT} - C_{EXEC}
- \sum_{s \in S: h(s) = 1} qa(s)
\\&- \sum_{d \in D_{MISS}} loss(d)
- \sum_{d \in D_{PRED}} qf(d) \cdot loss(d).
\end{split}
\end{equation}
Therefore, the profit is positive, if
\begin{equation}
\begin{split}
0 &< \sum_{s \in S}p_{qa} \cdot qa(s)
+ \sum_{d \in D} (1-p_{qa}^{|d|}) \cdot loss(d) 
\\&+ \sum_{d \in D} p_{qa}^{|d|} \cdot qf(d) \cdot loss(d)
\\&- C_{INIT} - C_{EXEC}
- \sum_{s \in S: h(s) = 1} qa(s)
\\&- \sum_{d \in D_{MISS}} loss(d)
- \sum_{d \in D_{PRED}} qf(d) \cdot loss(d).
\end{split}
\end{equation}
We now move all terms that contain $loss(d)$ to the left-hand side of the
equation and get

\begin{equation}
\begin{split}
\sum_{d \in D_{MISS}} loss(d)
&+\sum_{d \in D_{PRED}} qf(d) \cdot loss(d)
\\- \sum_{d \in D} (1-p_{qa}^{|d|}) \cdot loss(d)
&- \sum_{d \in D} p_{qa}^{|d|} \cdot qf(d) \cdot loss(d)
\\&<
\\\sum_{s \in S} p_{qa} \cdot qa(s)
- C_{INIT} &- C_{EXEC}
- \sum_{s \in S: h(s) = 1} qa(s)
\end{split}
\end{equation}
The right-hand side of the equation is now equal to $y$ as defined in
Equation~\eqref{eq:y}. For the left-hand side, we use that one of the conditions
of our theorem is $loss(d)=C$ and get
\begin{equation}
\begin{split}
\sum_{d \in D_{MISS}} C
&+\sum_{d \in D_{PRED}} qf(d) \cdot C
\\- \sum_{d \in D} (1-p_{qa}^{|d|}) \cdot C
&- \sum_{d \in D} p_{qa}^{|d|} \cdot qf(d) \cdot C
\\&< y
\end{split}
\end{equation}
Because $C$ is independent of the terms of the sums, we can factorize $C$ and
get
\begin{equation}
\begin{split}
C \cdot \Biggl(&\sum_{d \in D_{MISS}} 1+\sum_{d \in D_{PRED}} qf(d)
\\&-\sum_{d \in D} (1-p_{qa}^{|d|})
-\sum_{d \in D} p_{qa}^{|d|} \cdot qf(d)\Biggr) < y.
\end{split}
\end{equation}
Because $-\sum_{d \in D} (1-p_{qa}^{|d|}) = -\sum_{d \in D} 1 +
\sum_{d \in D} p_{qa}^{|d|})$ and $-\sum_{d \in D} 1 = -\sum_{d \in D_{miss}} 1
- \sum_{d \in D_{PRED}} 1$ it follows that
\begin{equation}
\begin{split}
C \cdot \Biggl(
&\sum_{d \in D_{PRED}} qf(d)
-\sum_{d \in D_{PRED}} 1
\\&+\sum_{d \in D} p_{qa}^{|d|}
-\sum_{d \in D} p_{qa}^{|d|} \cdot qf(d)
\Biggr) < y
\end{split}
\end{equation}
We can reformulate this as
\begin{equation}
\begin{split}
C \cdot \Biggl(
\sum_{d \in D_{PRED}} (qf(d)-1)
+\sum_{d \in D} p_{qa}^{|d|} \cdot (1-qf(d))
\Biggr) < y.
\end{split}
\end{equation}
The left-hand side of the equation is $C \cdot x$ with $x$ as defined in
Equation~\eqref{eq:x}. If we divide by $x$, we get our boundary conditions for $C$. 
\end{proof}

We use Theorem~\ref{thm:boundaries-random} to derive two corollaries.
Corollary~\ref{thm:boundaries-lowerboundary} specifies a lower boundary
through not applying additional quality assurance at all, which
is the same as a random defect prediction model with probablity $p_{qa}=0$. The
resulting boundary condition basically means that the costs saved due to
predicted defects must be greater than the costs for the additional quality
assurance. Corollary~\ref{thm:boundaries-upperboundary} specifies an
upper boundary through applying additional quality assurance to all artifacts,
which is the same as a random defect prediction model with probability $p_{qa}=1$.
The resulting boundary condition basically means that the costs due to missed
defects must be lower than the costs for the quality assurance for the artifacts
that were not predicted as defective would have been.

\begin{corollary}\label{thm:boundaries-lowerboundary}
Let $S$ be the artifacts of a software product with post-release defects $D$,
$h: S \to \{0,1\}$ a defect prediction model, $qa(s)$ the costs for the quality
assurance of artifact $s \in S$, $qf(d)$ the expected value that quality
assurances misses a defect, $loss(d) = C$ the costs of a defect. 

The defect prediction model $h$ has a positive profit in comparison to no
quality assurance for any artifact $s \in S$ if

\begin{equation}
\label{eq:C-boundary-lower}
C > \frac{\sum_{s \in S: h(s) = 1} qa(s) + C_{INIT} + C_{EXEC}}{\sum_{d \in
D_{PRED}} (1-qf(d))}. 
\end{equation}
\end{corollary}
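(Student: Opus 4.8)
The plan is to obtain Corollary~\ref{thm:boundaries-lowerboundary} as the special case of Theorem~\ref{thm:boundaries-random} in which quality assurance is never applied at random, i.e., $p_{qa}=0$; this is precisely the ``no quality assurance for any artifact'' strategy referenced in the corollary. First I would substitute $p_{qa}=0$ into the quantities $x$ and $y$ from Equations~\eqref{eq:x} and~\eqref{eq:y}. For $x$, the key observation is that every defect affects at least one artifact, so $|d|\geq 1$ and hence $p_{qa}^{|d|}=0^{|d|}=0$ for all $d \in D$; the second sum in~\eqref{eq:x} therefore vanishes and $x = \sum_{d \in D_{PRED}}(qf(d)-1) = -\sum_{d \in D_{PRED}}(1-qf(d))$. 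For $y$, the term $\sum_{s \in S} p_{qa}\cdot qa(s)$ vanishes, leaving $y = -\sum_{s \in S:\,h(s)=1} qa(s) - C_{INIT} - C_{EXEC}$.

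Next I would determine the sign of $x$ so that the correct branch of Theorem~\ref{thm:boundaries-random} applies. Since the cost model assumes $qf(d)\in[0,1)$, each summand $1-qf(d)$ is strictly positive, so (provided $D_{PRED}\neq\emptyset$) we have $x<0$ and Equation~\eqref{eq:boundary-random-x<0} is the relevant case: $h$ has positive expected profit relative to the $p_{qa}=0$ strategy whenever $C > y/x$. Substituting the expressions for $x$ and $y$ and cancelling the common minus sign gives
\begin{equation*}
\frac{y}{x} = \frac{\sum_{s \in S:\,h(s)=1} qa(s) + C_{INIT} + C_{EXEC}}{\sum_{d \in D_{PRED}} (1-qf(d))},
\end{equation*}
which is exactly the bound claimed in Equation~\eqref{eq:C-boundary-lower}, completing the argument.

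The only real obstacle is the degenerate case $D_{PRED}=\emptyset$, in which the denominator of~\eqref{eq:C-boundary-lower} becomes zero: a model that successfully predicts no defect at all cannot produce a positive profit over doing nothing while still incurring $C_{INIT}$, $C_{EXEC}$ and the quality assurance costs, so this case must either be excluded as a precondition or stated to be trivially unprofitable. Apart from that, the proof is a direct specialization of Theorem~\ref{thm:boundaries-random}; the manipulations are routine, and the only point demanding care is tracking the sign of $x$ so that the inequality is not inadvertently reversed when dividing.
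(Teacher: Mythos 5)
Your proposal is correct and follows essentially the same route as the paper's own proof: substitute $p_{qa}=0$ into Theorem~\ref{thm:boundaries-random}, observe that the second sum in $x$ vanishes, use $qf(d)\in[0,1)$ to conclude $x<0$, apply the branch in Equation~\eqref{eq:boundary-random-x<0}, and cancel the sign. Your additional remark about the degenerate case $D_{PRED}=\emptyset$ is a valid observation that the paper's proof silently glosses over, but it does not change the substance of the argument.
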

\begin{proof}
No quality assurance is the same as a random approach for defect prediction with
$p_{qa} = 0$. We use this to calculate $x$ and $y$ from
Theorem~\ref{thm:boundaries-random}. 

For $y$ we get that 
\begin{equation}
y = \sum_{s \in S} 0 \cdot qa(s) - \sum_{s \in S: h(s) = 1} qa(s) -
C_{INIT} - C_{EXEC}
\end{equation}
Because the first term is zero, we get
\begin{equation}
y = -\sum_{s \in S: h(s) = 1} qa(s) - C_{INIT} - C_{EXEC}.
\end{equation}

For $x$ we get that
\begin{equation}
\begin{split}
x = \sum_{d \in D_{PRED}} (qf(d)-1)
+\sum_{d \in D} 0^{|d|} \cdot (1-qf(d))
\end{split}
\end{equation}
When we remove the terms that are now zero, we get
\begin{equation}
x = \sum_{d \in D_{PRED}} (qf(d)-1)
\end{equation}
Since $qf(d) \in [0,1)$ it follows that $qf(d)-1<0$ and consequently that $x<0$.
Thus, Equation~\eqref{eq:boundary-random-x<0} applies and we get
\begin{equation}
C > \frac{-\sum_{s \in S: h(s) = 1} qa(s) - C_{INIT} - C_{EXEC}}{\sum_{d \in
D_{PRED}} (qf(d)-1)}. 
\end{equation}
We can factorize -1 from the nominator of the right-hand side and then multiply
the -1 with the denominator instead and get
\begin{equation}
\label{eq:C-lower-proof-final}
C > \frac{\sum_{s \in S: h(s) = 1} qa(s) + C_{INIT} + C_{EXEC}}{\sum_{d \in
D_{PRED}} (1-qf(d))}. 
\end{equation} 
\end{proof}

\begin{corollary}\label{thm:boundaries-upperboundary}
Let $S$ be the artifacts of a software product with post-release defects $D$,
$h: S \to \{0,1\}$ a defect prediction model, $qa(s)$ the costs for the quality
assurance of artifact $s \in S$, $qf(d)$ the expected value that quality
assurances misses a defect, $loss(d) = C$ the costs of a defect. 

The defect prediction model $h$ has a positive profit in comparison to 
quality assurance for all artifacts $s \in S$ if

\begin{equation}
\label{eq:C-boundary-lower}
C < \frac{\sum_{s \in S: h(s) = 0} qa(s) - C_{INIT} - C_{EXEC}}{\sum_{d \in
D_{MISS}} (1-qf(d))}. 
\end{equation}
\end{corollary}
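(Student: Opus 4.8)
The plan is to mirror the proof of Corollary~\ref{thm:boundaries-lowerboundary}, but to instantiate Theorem~\ref{thm:boundaries-random} with $p_{qa}=1$ instead of $p_{qa}=0$: applying quality assurance to every artifact is exactly the random defect prediction model that flags each artifact with probability one. So the first step is simply to substitute $p_{qa}=1$ into the definitions of $x$ and $y$ from Equations~\eqref{eq:x} and~\eqref{eq:y} and simplify each.

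For $y$, substituting $p_{qa}=1$ gives $y = \sum_{s \in S} qa(s) - \sum_{s \in S: h(s)=1} qa(s) - C_{INIT} - C_{EXEC}$, and since the prediction $h$ partitions $S$ into the artifacts with $h(s)=1$ and those with $h(s)=0$, the first two sums collapse to $\sum_{s \in S: h(s)=0} qa(s)$, which is precisely the numerator in the claimed bound. For $x$, I use $1^{|d|}=1$ to obtain $x = \sum_{d \in D_{PRED}}(qf(d)-1) + \sum_{d \in D}(1-qf(d))$; splitting $\sum_{d \in D}$ into its $D_{PRED}$ and $D_{MISS}$ parts (which is legitimate since $D = D_{PRED} \,\dot\cup\, D_{MISS}$), the $D_{PRED}$ contributions cancel in pairs and what remains is $x = \sum_{d \in D_{MISS}}(1-qf(d))$, the denominator in the claim.

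It then only remains to determine the sign of $x$ so that the correct case of Theorem~\ref{thm:boundaries-random} applies. Since $qf(d) \in [0,1)$ by assumption, every summand $1-qf(d)$ is strictly positive, so $x > 0$ whenever $D_{MISS} \neq \emptyset$; then Equation~\eqref{eq:boundary-random-x>0} gives $C < y/x$, which is exactly Equation~\eqref{eq:C-boundary-lower} after inserting the simplified $x$ and $y$. Note this is genuinely cleaner than the companion corollary, where $x<0$ forced a sign-flip and a factor of $-1$ out of the numerator. The only point that needs a word of care is the degenerate case $D_{MISS} = \emptyset$: then $x = 0$, the quotient is undefined, and the statement is vacuous in this case — intuitively, if $h$ predicts every defect, the comparison against "quality assurance everywhere" degenerates to a pure comparison of quality assurance costs, captured by the sign of $y$ rather than by any bound on $C$. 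Since the argument is essentially a direct specialization of the already-proven Theorem~\ref{thm:boundaries-random}, I do not anticipate any real obstacle; the only mild subtlety is bookkeeping the cancellations in $x$ and recognizing that, in contrast to Corollary~\ref{thm:boundaries-lowerboundary}, here it is the $x>0$ branch that is relevant.
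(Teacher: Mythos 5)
Your proposal is correct and follows essentially the same route as the paper's own proof: set $p_{qa}=1$ in Theorem~\ref{thm:boundaries-random}, simplify $y$ via the partition of $S$ by $h$ and $x$ via the partition $D = D_{PRED} \cup D_{MISS}$ (with the $D_{PRED}$ terms cancelling), conclude $x>0$ from $qf(d)<1$, and apply the $x>0$ branch. Your added remark about the degenerate case $D_{MISS}=\emptyset$ is a small point of extra care that the paper's proof glosses over, but it does not change the argument.
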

\begin{proof}
Quality assurance for all artifacts $s \in S$ is the same as a random approach
for defect prediction with $p_{qa} = 1$. We use this to calculate $x$ and $y$ from
Theorem~\ref{thm:boundaries-random}. 

For $y$ we get that 
\begin{equation}
y = \sum_{s \in S} 1 \cdot qa(s) - \sum_{s \in S: h(s) = 1} qa(s) -
C_{INIT} - C_{EXEC}
\end{equation}
Because $\sum_{s \in S} qa(s) = \sum_{s \in S: h(s)=0} qa(s) + \sum_{s \in S:
h(s) = 1} qa(s)$, we get
\begin{equation}
y = \sum_{s \in S: h(s) = 0} qa(s) - C_{INIT} - C_{EXEC}.
\end{equation}

For $x$ we get that
\begin{equation}
\begin{split}
x = \sum_{d \in D_{PRED}} (qf(d)-1)
 +\sum_{d \in D} 1^{|d|} \cdot (1-qf(d)).
\end{split}
\end{equation}
Because $\sum_{d \in D} (1-qf(d)) = \sum_{d \in D_{MISS}} (1-qf(d)) + \sum_{d
\in D_{PRED}} (1-qf(d))$ we get
\begin{equation}
\begin{split}
x = \sum_{d \in D_{MISS}} 1-\sum_{d \in D_{MISS}} qf(d).
\end{split}
\end{equation}
We can rewrite this as
\begin{equation}
\begin{split}
x = \sum_{d \in D_{MISS}} (1- qf(d)).
\end{split}
\end{equation}
Since $qf(d) \in [0,1)$ it follows that $1-qf(d)>0$ and consequently $x>0$.
Thus, Equation~\eqref{eq:boundary-random-x>0} applies and we get
\begin{equation}
\label{eq:C-boundary-lower}
C < \frac{\sum_{s \in S: h(s) = 0} qa(s) - C_{INIT} - C_{EXEC}}{\sum_{d \in
D_{MISS}} (1-qf(d))}. 
\end{equation}
\end{proof}

\section{Initializations of the general cost model}
\label{sec:init}

To actually initialize our cost model and use it for the computation of costs
and concrete values for the boundary conditions on defect costs, we need
estimations for $C_{INIT}$, $C_{EXEC}$, $qa(s)$, $qf(s)$, and $loss(d)$. We
start this section with a discussion on how each of these costs components may
be estimated. Then, we proceed to initialize six concrete cost models from the
general costs model including the resulting boundary conditions on the costs of
defects. 

\subsection{One-time and execution costs}

In Section~\ref{sec:one-time-costs}, we already argued that the costs $C_{INIT}$ and
$C_{EXEC}$ are rather an issue of the development of a tool market, than of the
actual performance of the defect prediction model. Moreover, tooling costs are
usually small in comparison to other human resources. Because these costs are
likely a minor costs component and not a deciding factor for the introduction of
a defect prediction model, we assume these costs to be zero and omit them from
the initializations of the general cost model. In case the costs are known for a
specific use case, the constants can simply be added to the model again, without
major changes. The impact of this decision is further discussed in
Section~\ref{sec:impact-exec-costs}.

\subsection{Cost of quality assurance}

For the size-aware cost model, we assume that the costs of quality assurance are
a linear function of the size of a software artifact, i.e.,

\begin{equation}
\label{eq:qa-size}
qa_{size}(s) = C_{QA} \cdot size(s)
\end{equation}
where $C_{QA}$ is a constant that describes the quality assurance costs per
size unit. The idea to measure quality assurance effort in relation to the
size of software artifacts is, e.g., also used by Rahman\etal~\cite{Rahman2012}
and Canfora\etal~\cite{Canfora2013, Canfora2015}, who both use the lines of
code of the software artifacts as indicator for quality assurance efforts. 

In case a simpler approach is wanted or the size of artifacts is not considered
a relevant factor, we propose to use constant costs for quality assurance
independent of the size, i.e.,
\begin{equation}
qa_{const}(s) = C_{QA}.
\end{equation}

The assumption of constant costs is also used, e.g., in the cost models by
Khoshgoftaar\etal~\cite{Khoshgoftaar1998, Khoshgoftaar2004} and Zhang and
Cheung~\cite{Zhang2013}. In both cases, the values for the constants in a
concrete use of the cost model must be estimated by a member of the
organization that wants to use the model based on the quality assurance measures
that are applied.

\subsection{Costs of defects}
\label{sec:cost-defects}

The costs of defects are hard to estimate. To the best of our knowledge, there
is no empirical evidence on the costs of post-release defects in general.
Anecdotal evidence suggests that the costs of post-release defects depend
strongly on the project context and the defects themselves. Some defects are
extremely costly, others may be very cheap. In conclusion, there is no
general way to estimate the costs of defects. To the best of our knowledge,
there is also no study that links the costs of defects to
specific features, e.g., the source code or similar.
Because of that, the (currently) only reasonable way for estimating the costs of
defects is to use a constant for the costs of defects, i.e., 

\begin{equation}
\label{eq:loss-const}
loss(d) = C_{DEF}.
\end{equation}

This constant reflects the mean costs of defects within a project. Same as for
$C_{QA}$, the value of this constant should be estimated by a member of the
organization based on data about past defects. 

\subsection{Quality assurance failures}

We use Bernoulli experiments to determine if quality assurance fails to reveal a
defect $d$ in artifact $s$, i.e., we assume that we have a
probability of $p_{qf}$ that the quality assurance does not discover a defect
$d$ in $s$, independent of $s$ itself. In order to successfully prevent a post-release
defect $d$, the quality assurance must not fail to reveal $d$ on all artifacts
$s \in d$. Thus, we have to perform $|d|$ Bernoulli experiments and the quality
assurance fails if any of the Bernoulli experiments fails. This is the opposite
of $|d|$ times not failing the Bernoulli experiment, which has a probability of
$(1-p_{qf})^{|d|}$. Thus, the probability of not finding a defect is
$1-(1-p_{qf})^{|d|}$. It follows that we can also do a single Bernoulli experiment
with probability $1-(1-p_{qf})^{|d|}$. Since the
expected value of the success of repeated Bernoulli experiments is the same as
the probability of the Bernoulli experiment, we get
\begin{equation}
qf(d) = 1-(1-p_{qf})^{|d|}.
\end{equation}

\subsection{Cost ratios}
\label{sec:ratios}

For the above definitions of costs, we use the constants $C_{QA}$ and $C_{DEF}$
to model the average costs of quality assurance, respectively defects. Let $C$
be the ratio between the average costs, i.e., $C = \frac{C_{DEF}}{C_{QA}}$. This
estimation of costs as a ratio is based on the work by
Khoshgoftaar\etal~\cite{Khoshgoftaar1998, Khoshgoftaar2004}. Since we do
not know the actual project-specific costs, we can assume without loss of
generality that $C_{QA} = 1$, which means that we assume that our quality
assurance constant is ``one cost unit''. We then get $C = \frac{C_{DEF}}{1} =
C_{DEF}$. Thus, this ratio is consistent with the costs for defects for which we
defined boundary conditions in Theorem~\ref{thm:boundaries-random},
Corolarry~\ref{thm:boundaries-lowerboundary}, and
Corolarry~\ref{thm:boundaries-upperboundary}. We can use $C$ to reformulate
equations~\eqref{eq:qa-size}--\eqref{eq:loss-const} as
\begin{equation}
\label{eq:ratios}
\begin{split}
&qa_{size}(s) = C_{QA} \cdot size(s) = size(s)\\
&qa_{const}(s) = C_{QA} = 1 \\
&loss(d) = C_{DEF} = C.
\end{split}
\end{equation}

Through this ratio, organizations also do not need to estimate $C_{QA}$ and
$C_{DEF}$ directly anymore. An estimate how much the costs of defects is in
relation to the costs for the quality assurance is sufficient.
 
\subsection{Relationship between artifacts and defects}

Another factor that influences how we initialize the general cost model is the
actual relationship between artifacts and defects. There are three relevant
scenarios.
\begin{enumerate}
  \item The $n$-to-$m$ is the scenario we considered so far, i.e., each
  defect may belong to multiple artifacts and each artifact may contain
  multiple defects. This is the most complex, but also most realistic scenario.
  The importance of this scenario was also highlighted by
  Hemmati\etal~\cite{Hemmati2015} in their work on exploiting the $n$-to-$m$
  relationships for improving rankings of defect prediction models. 
  \item In the $1$-to-$m$ scenario, each defect may only belong to one
  artifact, but artifacts may have multiple defects. This scenario is relevant, because it
  reflects defect prediction data sets with bug counts, e.g,
  \cite{DAmbros2012}.
  \item In the $1$-to-$1$ scenario, each artifact is either labeled as
  defective or not. All information regarding the number of artifacts that are
  impacted by a defect or the number of defects per artifact is ignored. Some
  data sets contain this kind of data, e.g., the NASA MDP
  data.\footnote{http://openscience.us/repo/defect/mccabehalsted/}
  Moreover, this scenario is dominant in the evaluation of defect prediction approaches in
  the literature, as can, e.g., be seen in the analysis which
  metrics were used for the evaluation of studies on cross-project
  defect prediction~\cite{Hosseini2017,Herbold2017}.
\end{enumerate}
 
Theoretically, we could also consider the $n$-to-$1$ scenario, i.e., each defect
may belong to multiple artifacts, but each artifact may only be affected by one
defect. To the best of our knowledge, this scenario was never considered in
defect prediction research so far. Moreover, defect data sets with bug counts
demonstrate that there are files that are affected by multiple defects, i.e.,
that this scenario is unrealistic. Therefore, we do not consider this
relationship any further.

\subsection{Initializations of the cost model}

Due to the two ways to model the quality assurance costs (size-aware and
constant), and the three relationships between artifacts and defects
($n$-to-$m$, $1$-to-$m$, $1$-to-$1$), we get a total of six initializations of
our general cost model.

We start with the initialization of the general cost model from
Equation~\eqref{eq:cost2} using the size-aware quality assurance costs
$qa_{size}$ and the loss function from Equation~\eqref{eq:ratios}.
This way, we get a size-aware cost model with an $n$-to-$m$ mapping between
artifacts and defects  
\begin{equation}
\label{eq:cost-size-n-to-m-eq1}
\begin{split}
cost_{size,n/m} &= \sum_{s \in S: h(s) = 1} size(s)
\\&+ \sum_{d \in D_{MISS}} C
+ \sum_{d \in D_{PRED}} (1-(1-p_{qf})^{|d|}) \cdot C.
\end{split}
\end{equation}

Because $\sum_{d \in D_{MISS}} C = |D_{MISS}| \cdot C$, we can rewrite
Equation~\eqref{eq:cost-size-n-to-m-eq1} equation as
\begin{equation}
\label{eq:cost-size-n-to-m-eq2}
\begin{split}
cost_{size,n/m} &= \sum_{s \in S: h(s) = 1}  size(s)
\\&+ |D_{MISS}| \cdot C
+ \sum_{d \in D_{PRED}} (1-(1-p_{qf})^{|d|}) \cdot C.
\end{split}
\end{equation}

We get a 1-to-$m$ relationship between software artifacts and defects if we
assume $|d|=1$ for all $d \in D$. We observe that 
\begin{equation}
(1-(1-p_{qf})^{1}) = p_{qf}.
\end{equation}
When we use this to simplify Equation~\eqref{eq:cost-size-n-to-m-eq2}, we get a
size-aware cost model with a $1$-to-$m$ mapping between artifacts and defects
\begin{equation}
\begin{split}
cost_{size,1/m} &= \sum_{s \in S: h(s) = 1}  size(s)
\\&+ |D_{MISS}| \cdot C
+ \sum_{d \in D_{PRED}} p_{qf} \cdot C.
\end{split}
\end{equation}
Because $\sum_{d \in D_{PRED}} p_{qf} \cdot C = |D_{PRED}| \cdot p_{qf} \cdot
C$, we can rewrite this as
\begin{equation}
\label{eq:cost-size-1-to-m}
\begin{split}
cost_{size,1/m} &= \sum_{s \in S: h(s) = 1}  size(s)
\\&+ |D_{MISS}| \cdot C
+ |D_{PRED}| \cdot p_{qf} \cdot C.
\end{split}
\end{equation}

\begin{table}[t]
\centering
\begin{tabular}{p{1.4cm}rrrrr}
  \hline
 & $|S|$ & $|S_{DEF}|$ & $|D|$ & $mean(|d|)$ & $mean(LOC)$ \\
  \hline
  archiva & 508 &   6 &   4 & 2.00 & 108.85 \\ 
  cayenne & 2121 & 281 &  74 & 5.12 & 73.46 \\ 
  commons- & \multirow{2}{*}{789} &   \multirow{2}{*}{2} &   \multirow{2}{*}{2}
  & \multirow{2}{*}{1.00} & \multirow{2}{*}{112.94} \\
  math & & & & & \\ 
  deltaspike & 793 &  14 &  13 & 1.31 & 56.17 \\ 
  falcon & 577 &  38 &  33 & 2.91 & 121.82 \\ 
  kafka & 1119 & 201 & 212 & 2.00 & 87.54 \\ 
  kylin & 1094 & 170 & 138 & 1.95 & 105.98 \\ 
  nutch & 414 &  37 &  30 & 1.73 & 106.74 \\ 
  storm & 1981 & 173 & 138 & 1.88 & 114.68 \\ 
  struts & 1334 &  61 &  38 & 2.26 & 79.36 \\ 
  tez & 803 & 94 &  71 & 1.98 & 129.33 \\ 
  tika & 694 &  44 &  35 & 1.62 & 105.06 \\ 
  wss4j & 501 &  10 &  7 & 2.00 & 110.55 \\ 
  zeppelin & 394 & 89 & 142 & 1.63 & 177.53 \\ 
  zookeeper & 380 &  41 &  27 & 1.85 & 113.21 \\ 
   \hline
\end{tabular}
\caption{Data used for the evalulation of the cost model.}
\label{tbl:data}
\end{table}

If we further assume that there is only one defect per artifact, it follows that
$|D_{MISS}|=fn$ and $|D_{PRED}|=tp$. Using this, we can further simplify the
cost model from  Equation~\eqref{eq:cost-size-1-to-m} and get a size-aware cost model
with a $1$-to-$1$ mapping between artifacts and defects
\begin{equation}
\label{eq:cost-size-1-to-1}
\begin{split}
cost_{size,1/1} &= \sum_{s \in S: h(s) = 1}  size(s)
\\&+ fn \cdot C
+ tp \cdot p_{qf} \cdot C.
\end{split}
\end{equation}

If we initialize the general cost model from Equation~\eqref{eq:cost2} with
$qa_{const}$ from Equation~\ref{eq:ratios}, we get a constant cost model where the artifact
size is not taken into account and a $n$-to-$m$ relationship between artifacts
and defects
\begin{equation}
\label{eq:cost-const-n-to-m-eq1}
\begin{split}
cost_{const,n/m} &= \sum_{s \in S: h(s) = 1} 1
\\&+ \sum_{d \in D_{MISS}} C
+ \sum_{d \in D_{PRED}} (1-(1-p_{qf})^{|d|}) \cdot C.
\end{split}
\end{equation}

Because $\sum_{s \in S: h(s) = 1} 1 = tp+fp$ and $\sum_{d \in D_{MISS}} C =
|D_{MISS}| \cdot C$ we can rewrite Equation~\eqref{eq:cost-size-n-to-m-eq1} as
\begin{equation}
\label{eq:cost-const-n-to-m-eq2}
\begin{split}
cost_{const,n/m} &= tp+fp
\\&+ |D_{MISS}| \cdot C
\\&+ \sum_{d \in D_{PRED}} (1-(1-p_{qf})^{|d|}) \cdot C.
\end{split}
\end{equation}

We can simplify Equation~\eqref{eq:cost-const-n-to-m-eq2} analogously to
Equation~\eqref{eq:cost-size-1-to-m} and get a constant cost model with a
$1$-to-$m$ mapping between artifacts and defects
\begin{equation}
\label{eq:cost-const-1-to-m}
\begin{split}
cost_{const,1/m} &= tp+fp
\\&+ |D_{MISS}| \cdot C
+ |D_{PRED}| \cdot p_{qf} \cdot C.
\end{split}
\end{equation}

Similarly, we can simplify Equation~\eqref{eq:cost-const-1-to-m} analogously to
Equation~\eqref{eq:cost-size-1-to-1} and get a constant cost model with a
$1$-to-$1$ relationship between artifacts and defects.
\begin{equation}
\label{eq:cost-const-1-to-1}
\begin{split}
cost_{const,1/1} &= tp+fp
\\&+ fn \cdot C
+ tp \cdot p_{qf} \cdot C.
\end{split}
\end{equation}

The above initializations of the cost model can be used to calculate the costs
only if an estimate for the ratio between the costs for defects and the costs
for quality assurance $C$ is available. If an organization cannot estimate
these costs, we can apply these initializations to
corollaries~\ref{thm:boundaries-lowerboundary} and
\ref{thm:boundaries-upperboundary} and derive cost boundaries that define for
which cost ratios $C$ a defect prediction model would be cost saving.
Organizations could then estimate if it is likely that their ratio $C$ is within the
boundaries and determine if the defect prediction model has cost saving
potential, even though they cannot determine the amount of the cost savings directly. The
following corollary formalizes this and establishes the boundary conditions for
the different initializations of the cost model.

\begin{corollary}
\label{thm:boundaries-init}
Given the cost functions $cost_{size, n/m}$, $cost_{size, 1/m}$, $cost_{size,
1/1}$, $cost_{const,n/m}$, $cost_{const,1/m}$, respectively $cost_{const,1/1}$,
a defect prediction model $h: S \to \{0,1\}$ has a positive profit given
software artifacts $S$ and defects $D$ if 
\begin{equation}
\begin{split}
\frac{\sum_{s \in S: h(s)=1} size(s)}{\sum_{d \in D_{PRED}} (1-p_{qf})^{|d|}}
&< C_{size, n/m} <
\frac{\sum_{s \in S: h(s)=0} size(s)}{\sum_{d \in D_{MISS}} (1-p_{qf})^{|d|}}
\\
\frac{\sum_{s \in S: h(s)=1} size(s)}{|D_{PRED}| (1-p_{qf})}
&< C_{size, 1/m} <
\frac{\sum_{s \in S: h(s)=0} size(s)}{|D_{MISS}| (1-p_{qf})}
\\
\frac{\sum_{s \in S: h(s)=1} size(s)}{tp \cdot (1-p_{qf})}
&< C_{size, 1/1} <
\frac{\sum_{s \in S: h(s)=0} size(s)}{fn \cdot (1-p_{qf})}
\\
\frac{tp+fp}{\sum_{d \in D_{PRED}} (1-p_{qf})^{|d|}}
&< C_{const, n/m} <
\frac{tn+fn}{\sum_{d \in D_{MISS}} (1-p_{qf})^{|d|}}
\\
\frac{tp+fp}{|D_{PRED}| (1-p_{qf})}
&< C_{const, 1/m} <
\frac{tn+fn}{|D_{MISS}| (1-p_{qf})}
\\
\frac{tp+fp}{tp \cdot (1-p_{qf})}
 &< C_{const,1/1} <
\frac{tn+fn}{fn \cdot (1-p_{qf})},
\end{split}
\end{equation}
with $C_{size, n/m}, C_{size, 1/m}, C_{size, 1/1}, C_{const, n/m}, C_{const,
1/m}$, and $C_{const, 1/1}$ the ratios between the costs of defects
and costs for quality assurance, respectively.
\end{corollary}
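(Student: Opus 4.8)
The plan is to obtain each of the six bounds as a direct specialization of Corollary~\ref{thm:boundaries-lowerboundary} and Corollary~\ref{thm:boundaries-upperboundary}. Those two corollaries already give, respectively, the condition for a positive profit against the baseline that applies no additional quality assurance ($p_{qa}=0$) and against the baseline that applies quality assurance to every artifact ($p_{qa}=1$), both under the assumption of constant defect costs $loss(d)=C$. A defect prediction model that beats both baselines must have its cost ratio $C$ lie strictly between the lower bound of Corollary~\ref{thm:boundaries-lowerboundary} and the upper bound of Corollary~\ref{thm:boundaries-upperboundary}, so each claimed two-sided inequality is exactly the conjunction of those two corollaries once the relevant initialization is substituted in.

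First I would record the three substitutions that turn the general statement into a specific initialization: (i) the one-time and execution costs are set to zero, $C_{INIT}=C_{EXEC}=0$, as fixed at the start of Section~\ref{sec:init}; (ii) the quality assurance cost function is taken to be either $qa_{size}(s)=size(s)$ or $qa_{const}(s)=1$, using the normalization $C_{QA}=1$ from Section~\ref{sec:ratios}, so that the constant $C$ appearing in the corollaries is precisely the ratio $C_{DEF}/C_{QA}$; and (iii) the quality assurance failure probability is $qf(d)=1-(1-p_{qf})^{|d|}$, which yields the key identity $1-qf(d)=(1-p_{qf})^{|d|}$ that appears in every denominator.

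Next I would plug these into the two corollaries. For the $n$-to-$m$ models this already gives the first and fourth lines: the numerators become $\sum_{s \in S: h(s)=1} size(s)$ and $\sum_{s \in S: h(s)=0} size(s)$ for the size-aware case, and $tp+fp$ and $tn+fn$ for the constant case (using $\sum_{s \in S: h(s)=1} 1 = tp+fp$ and $\sum_{s \in S: h(s)=0} 1 = tn+fn$), while the denominators become $\sum_{d \in D_{PRED}} (1-p_{qf})^{|d|}$ and $\sum_{d \in D_{MISS}} (1-p_{qf})^{|d|}$. For the $1$-to-$m$ models I would additionally impose $|d|=1$ for all $d\in D$, which collapses $\sum_{d \in D_{PRED}} (1-p_{qf})^{|d|}$ to $|D_{PRED}|(1-p_{qf})$ and likewise for $D_{MISS}$, yielding the second and fifth lines. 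For the $1$-to-$1$ models I would further use that exactly one defect sits in each defective artifact, so $|D_{PRED}|=tp$ and $|D_{MISS}|=fn$, yielding the third and sixth lines. Relabeling the ratio $C$ in each case by the corresponding subscript ($C_{size,n/m}$, $C_{size,1/m}$, and so on) completes the bookkeeping.

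There is no genuine analytic obstacle here; the main difficulty is purely organizational, namely keeping the six parallel specializations mutually consistent and applying the normalization $C_{QA}=1$ uniformly so that the $C$ of Corollary~\ref{thm:boundaries-lowerboundary} and Corollary~\ref{thm:boundaries-upperboundary} is identified with the cost ratio rather than with an unnormalized absolute cost. One point I would state explicitly is that each bound is only meaningful when its denominator is positive, i.e. when $D_{PRED}\neq\emptyset$ for the lower bounds and $D_{MISS}\neq\emptyset$ for the upper bounds; since $p_{qf}\in[0,1)$ we have $(1-p_{qf})^{|d|}>0$, so every denominator is a sum of strictly positive terms and is positive exactly when the corresponding defect set is nonempty, which is consistent with the sign analysis ($x<0$ for the lower bound, $x>0$ for the upper bound) already carried out inside the proofs of the two corollaries.
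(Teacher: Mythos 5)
Your proposal is correct and follows exactly the route the paper takes: its own proof is the one-line observation that the bounds ``follow directly from Corollary~\ref{thm:boundaries-lowerboundary}, Corollary~\ref{thm:boundaries-upperboundary} and the definitions and calculations from Section~\ref{sec:init}'', and you simply spell out those substitutions ($C_{INIT}=C_{EXEC}=0$, $C_{QA}=1$, $1-qf(d)=(1-p_{qf})^{|d|}$, then $|d|=1$ and $|D_{PRED}|=tp$, $|D_{MISS}|=fn$ for the simpler relationships) explicitly. Your added remark that the bounds are only meaningful when $D_{PRED}$ and $D_{MISS}$ are nonempty is a sensible clarification the paper leaves implicit.
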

\begin{proof}
The boundaries follow directly from the
Corollary~\ref{thm:boundaries-lowerboundary},
Corollary~\ref{thm:boundaries-upperboundary} and the definitions and
calculations from Section~\ref{sec:init}.
\end{proof}
\section{Experiments}
\label{sec:experiments}

While the focus of this article is a theoretical model, we also want to get
insights into the practical relevance of the model. The main difference between
our approach and the state of the art is that consider an $n$-to-$m$
relationship between artifacts and defects. Thus, the primary goal of these
experiments is to evaluate if there are difference in real-world data between
the 1-to-1, the 1-to-$m$ and the $n$-to-$m$ cost models. Differences between the
1-to-1 and 1-to-$m$ cost model should manifest for all projects, in which there
are artifact that are affected by multiple defects. Differences between 1-to-$m$
and $n$-to-$m$ should manifest for all projects, in which there are defects that
affect multiple files. A secondary goal of these experiments is to get insights
into the empirical relationship between confusion matrix based metrics and the
cost boundaries. Through this, we want to answer the question if metrics like
\textit{precision} and \textit{recall} may be sufficient to evaluate the cost
effectiveness of defect prediction models.

\subsection{Data}
\label{sec:data}

We used SmartSHARK~\cite{Trautsch2017} to collect data from a convenience sample of fifteen Apache
projects to conduct our experiments.\footnote{Detailed steps of the data collection are part of the
replication kit~\cite{replication-kit}.}. For each project, we collected the
commits for the year 2017 from the master branch of the repository. Then, we
identified the links between commits and issues in the Jira issue tracker of the
project. We labelled commits as fixing defects, if referenced an issue of type
``bug'', that has the status ``resolved'' or ``closed'' or had this status at any point in their lifetime,
and that is not a ``duplicate'' of another issue. Once we identified which
defects were fixed, we used the hunks of the commits to identify which files
were changed during the correction of an issue. We filtered the hunks to exclude
changes that only affected whitespaces or comments. In case the same issue was
referenced as part of multiple commits, the hunks for all referencing commits
were used. We created a matrix for each project that has as rows the file names
and as columns the defects that were fixed, and the entries depict whether a
file was part of a bugfixing commit for the issue. We only allowed files ending
with .java as part of the data set and furthermore used a heuristic to exclude
tests\footnote{Files that were in a path that included a folder called
``test'' were excluded}. Defects that were fixed in the year 2017 but that did
not lead to any change in a Java file were ignored. We did not use any
keyword-based approach for the identification of bugfixing commits, as these
cannot distinguish between pre-release defects and post-release defects. By
using only labels based on issues we ensure that our data contains only
post-release defects. We manually validated for all issues that affected more
than one Java files that they were really defects and discarded all issues that
are mislabeled as defect in the issue tracking system. The manual validation
followed the criteria for defects established by Herzig\etal~\cite{Herzig2013}.
This way, we discarded 175 issues of 413 issues that affect multiple files. We
restricted our manual validation to issues affecting multiple files due to the
high manual effort required for the validation of issue types. By only excluding 
issues that affect multiple files, we bias the evaluation against showing a
difference between the $n$-to-$m$ model on the one hand, and the 1-to-1 and the
1-to-$m$ model on the other hand.

Table~\ref{tbl:data} gives an overview over the projects we analyzed. The table
shows the number of files of the project ($|S|$), the number of files that were
affected by any defect ($|S_{DEF}|$), the number of defects that were fixed
($|D|$), the mean number of files affected by each defect ($mean(|d|)$), and the
mean logical \ac{LOC}\footnote{Non-empty lines that are not only comments.}.

\subsection{Simulation of defect prediction}

The goal of our experiments is to get insights into our cost model, especially
the impact of the different relationships between defects and artifacts (1-to-1,
1-to-$m$, $n$-to-$m$) on the boundary conditions for cost effective defect
prediction. To achieve this, we simulated classification models for defect prediction
that achieve different performances on the data. We performed a Bernoulli
experiment for each software artifact with the expected accuracy as probability
to simulate the defect prediction. If the experiment is successful, we assign
the correct label, if it fails we flip the label. We used the values $0.05$ to
$0.95$ with a step size of $0.05$ for the expected accuracy. To account for the
randomness of the labelling, we repeated this 100 times. We use the formulas
from Corollary~\ref{thm:boundaries-init} to calculate the boundaries on the cost
efficiency for the defect prediction for each simulation run.

\subsection{Results}
\label{sec:results}

Because our results are based on simulated defect prediction models and we are
only interested in the general trends, we do not report the exact values of the
simulation within this article, but only perform a visual analysis. 
Due to space restrictions, we cannot include all results in this manuscript.
However, plots for all simulations, the raw simulation results, the defect data
we collected for the projects, as well as the source code for the simulation
and the generation of the plots can be found in our replication
kit~\cite{replication-kit}.

Figure~\ref{fig:rep-results}(a)--(b) show representative results of the
simulations with a perfect quality assurance, i.e., $p_{qf}=0$. The plots depict how the
upper and lower boundaries of the different cost models evolve with respect to the metrics
\textit{recall} ($\frac{tp}{tp+fn}$) and \textit{precision}
($\frac{tp}{tp+fp}$). We choose to show \textit{recall} and \textit{precision}
here, because these are the most commonly used metrics for defect prediction
research~\cite{Hosseini2017}. The different colors show the data for the
different relationships between software artifacts and defects. 
We identified two types of projects regarding the trends for the boundaries,
that are distinguished by the required values for \textit{recall} and
\textit{precision} for the model to be cost saving.
\begin{itemize}
  \item Projects where defect prediction can be cost saving with a
  high value for \textit{recall} and a very low \textit{precision} ($<25\%$).
  The projects archiva, cayenne, commons-math, deltaspike, falcon, kylin, nutch, storm, struts,
  tez, tika, wss4j, and zookeeper belong to this category for which the difficulty of cost
  saving defect prediction is low.
  \item Projects where defect prediction can be cost efficient with a high value
  for \textit{recall} and a mediocore \textit{precision} ($>25\%$ and $<50\%$).
  The projects kylin and zeppelin belong to this category for which the
  difficulty of cost saving defect prediction is medium.
\end{itemize}
We note that there are no projects that require a high precision for cost
effective defect prediction in our data. 

Figure~\ref{fig:rep-results}(c) shows the result for falcon with
$p_{qf}=0.5$, i.e., a fifty percent chance that a defect is missed in an
artifact regardless of the additional quality assurance. This result is
representative result for the effect of imperfect quality assurance ($p_{qf}>0$)
on the cost boundaries. For all projects, we observe two changes with increasing
values of $p_{qf}$. First, the cost ratios $C$ for which defect prediction may
be cost saving increases with $p_{qf}$. This is expected, because $(1-p_{qf})$
is part of the denominator of all cost boundaries. Second, cost saving defect
prediction models can be achieved with a lower performance of the defect
prediction model for the $n$-to-$m$ relationship. This is likely due the fact
that the denominator contains $(1-p_{qf})^{|d|}$ for the $n$-to-$m$. This gives
more weight to defects that affect only few artifacts, i.e., that are easier to
predict.

\begin{figure*}
\centering
\includegraphics[width=0.8\textwidth]{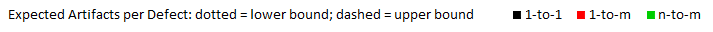}

\footnotesize
(a) falcon - very low difficulty defect prediction

\begin{minipage}{0.46\textwidth}
\centering
\includegraphics[width=\textwidth]{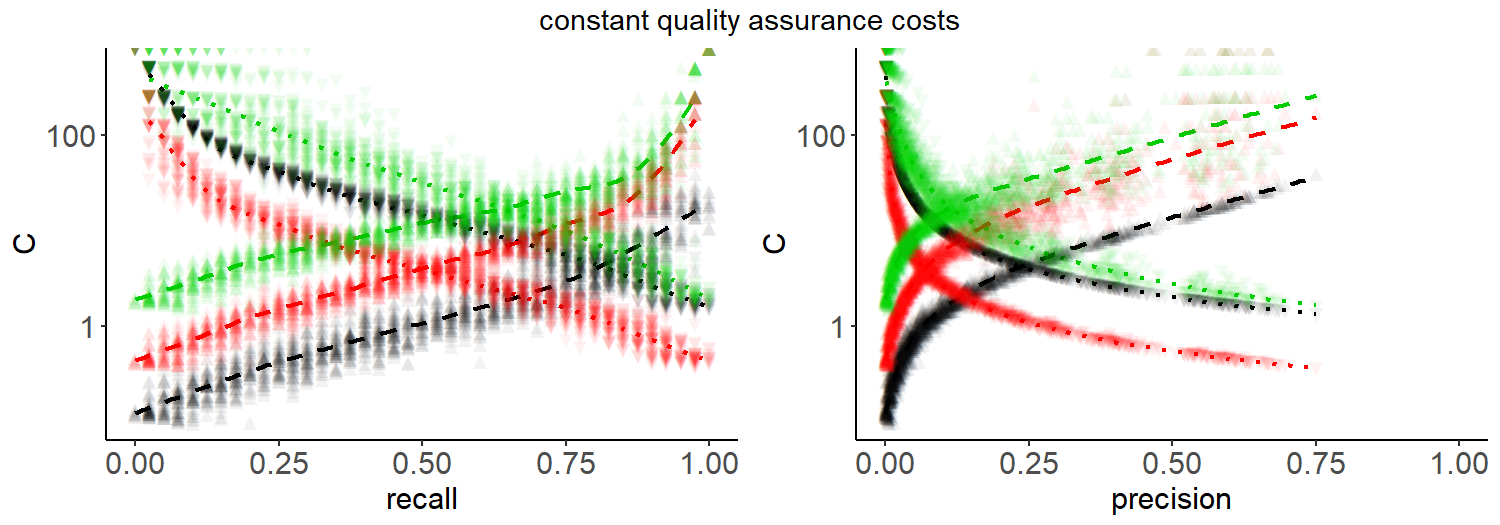}
\end{minipage}
\begin{minipage}{0.46\textwidth}
\centering
\includegraphics[width=\textwidth]{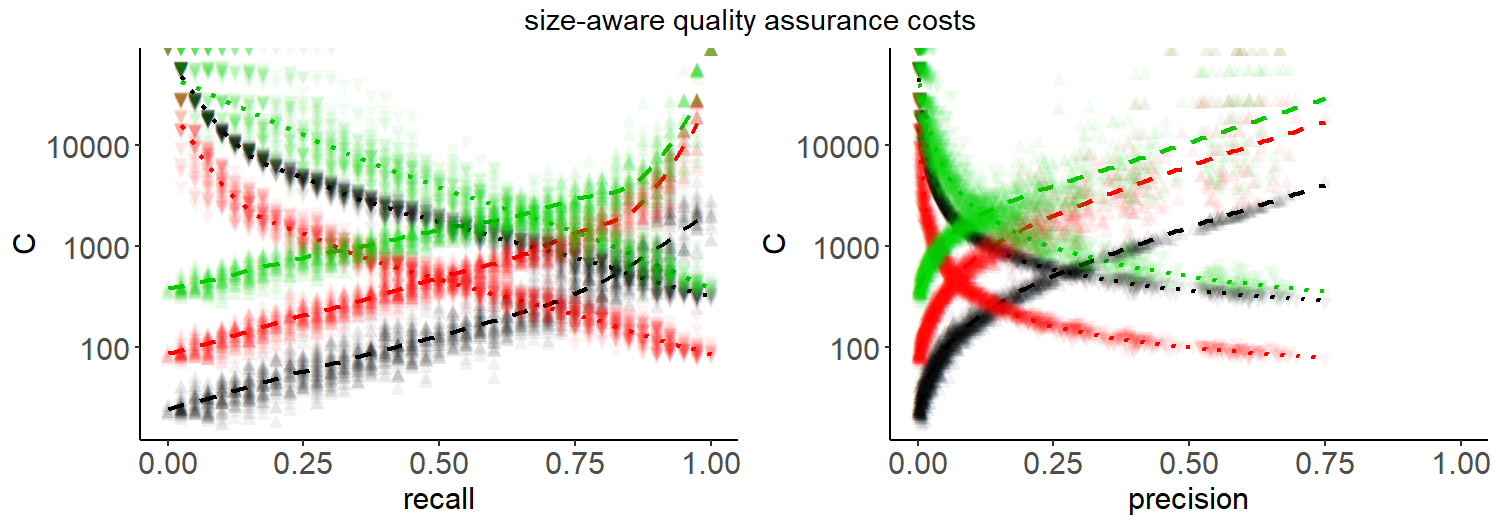}
\end{minipage}

(b) zeppelin - medium difficulty defect prediction

\begin{minipage}{0.46\textwidth}
\centering
\includegraphics[width=\textwidth]{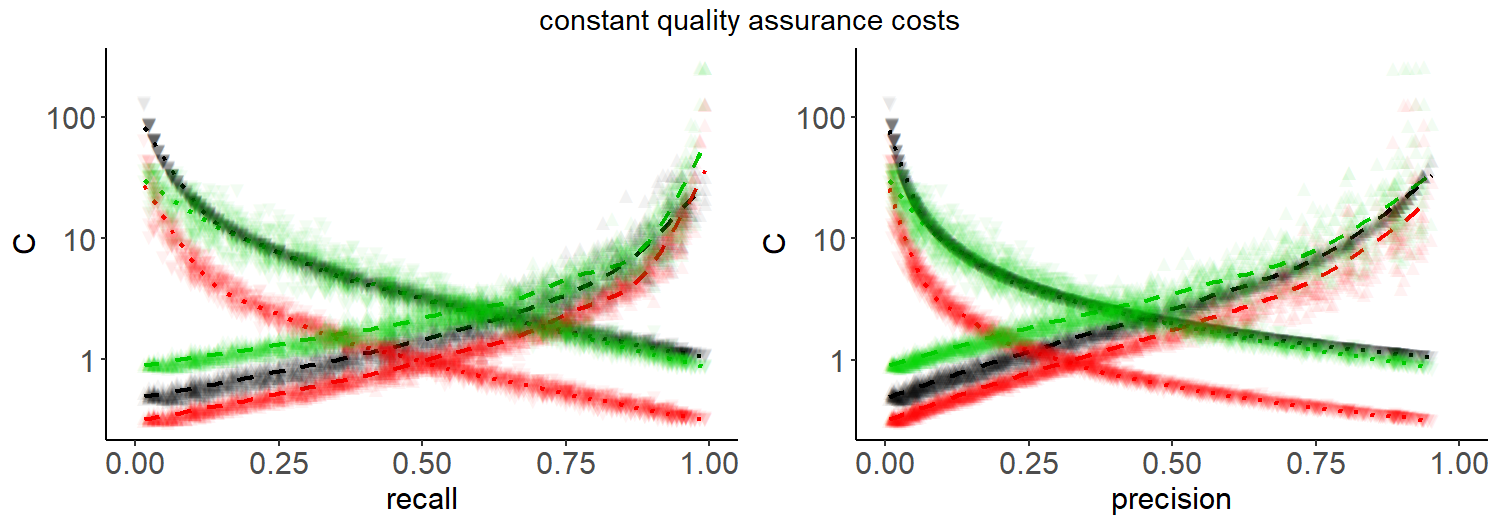}
\end{minipage}
\begin{minipage}{0.46\textwidth}
\centering
\includegraphics[width=\textwidth]{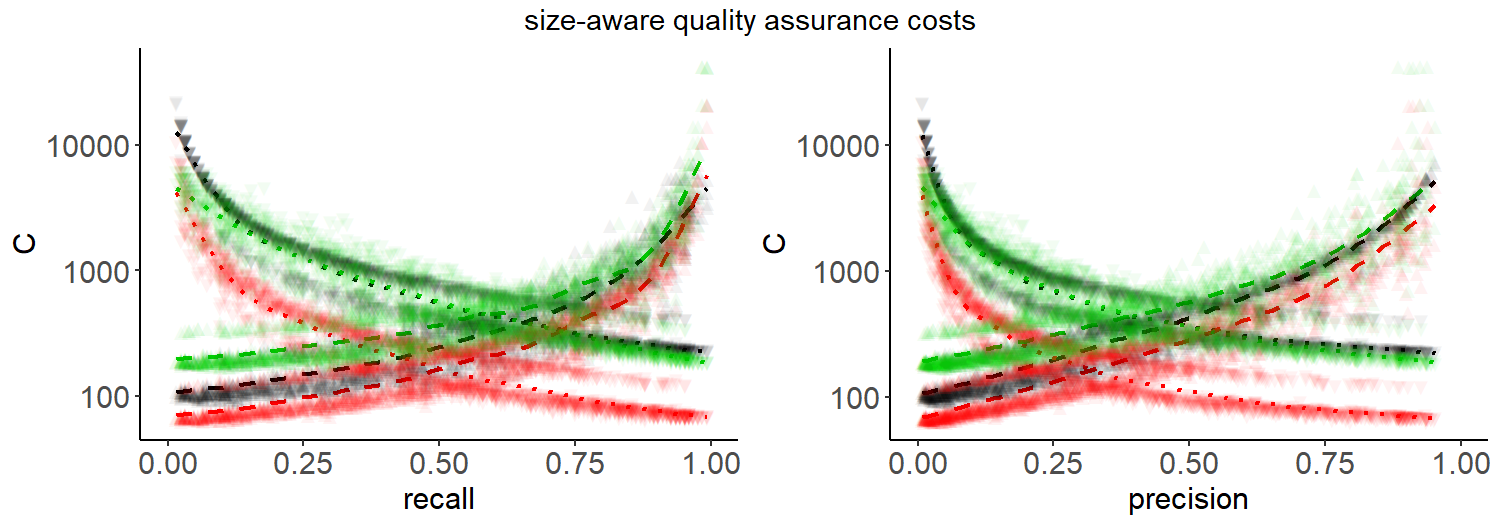}
\end{minipage}

(c) falcon - imperfect quality assurance with $p_{qf}=0.5$

\begin{minipage}{0.46\textwidth}
\centering
\includegraphics[width=\textwidth]{{{falcon_0.5_const}}}
\end{minipage}
\begin{minipage}{0.46\textwidth}
\centering
\includegraphics[width=\textwidth]{{{falcon_0.5_size}}}
\end{minipage}
\caption{Trends of the upper and lower boundaries on the cost
effectiveness. The lines show the regression of how the boundaries evolve in for
changing values of \textit{recall} and \textit{precision}. The jitter shows the
actual values of the experiments.}
\label{fig:rep-results}
\end{figure*}


\section{Discussion}
\label{sec:discussion}

While we evaluated our cost model only on simulated data and not on real defect
prediction data sets, there are already several important insights that
highlight the need for such a cost model for accurate evaluations of defect
prediction models. 

\subsection{Impact of n-to-m relationships}
\label{sec:n-to-m-relationship}

There are large differences between the $n$-to-$m$, the
$1$-to-$m$ and the 1-to-1 cost model. This is true both for the required
performance of the models that allow for cost efficient predictions (i.e., lower boundary is
less than upper boundary) as well as the range of values $C$ for which the
prediction is cost saving. This effect is present for all projects we analyzed.

Thus, the results demonstrate that evaluations of defect prediction models may
lead to wrong conclusions if the $n$-to-$m$ relationship between defects and
artifacts is ignored. The estimated intervals for the cost ratio between defects
and quality assurance for which defect prediction is cost efficient differs
strongly from assuming 1-to-1 relationships. We expected this result due to
multiple reasons:
\begin{itemize}
  \item In the 1-to-1 relationship, the same defects may be counted as
  multiple distinct defects, i.e., once for each artifact that is affected. 
  \item In the 1-to-1 relationship, defects may be ignored, i.e., if one
  artifact is affected by multiple defects. 
\end{itemize}

To the best of our knowledge, there is no publication on defect prediction, that
evaluates the results with respect to the $n$-to-$m$ relationship. This problem
affects all current defect prediction research. To the best of our knowledge,
there is only the work by Hemmati\etal~\cite{Hemmati2015}, that exploits the
$n$-to-$m$ relationship for the generation of a ranking of files. However,
Hemmati\etal{} evaluate their study based on the density of remaining
defects per file, i.e., they use an effort-aware 1-to-$m$ metric
for the the evalution of their results. We note that all approaches from the state of the art can be evaluated with
respect to the $n$-to-$m$ relationship, even if they completely ignore this property.
This is possible, because our cost model only requires a binary function for
labeling of single artifacts as defined in
Equation~\eqref{eq:defect-prediction-model}. Regardless, algorithms that would
already consider the $n$-to-$m$ relationship while fitting a prediction model,
as is done by Hemmati\etal~\cite{Hemmati2015}, are likely to perform better.

Moreover, the currently publicly
available data sets do not contain the required information to evaluate
$n$-to-$m$ relationships instead of 1-to-1 relationships. While some popular
data sets (e.g., \cite{Jureczko2010, DAmbros2010}) contain defect counts and can
be used for evaluations respecting the 1-to-$m$ relationship, others contain
only binary labels and thus can only be used for 1-to-1 relationships (e.g., the
NASA MDP data). The data published together with the RELINK approach is the only
exception \cite{Wu2011}. The prepared data for defect prediction only contains
binary labels and, thus, only allows the evaluations with 1-to-1 relationships.
However, the provided meta-data contains the links between all issues and
commits, as well as the files that were touched in each commit. Thus, the
generation of $n$-to-$m$ relationships would be possible with further processing
of the meta data.

However, RELINK only contains data about four projects and only for three of
them the complete data required for defect prediction are available. This is too
few for realistic evaluations and, therefore, cannot be used to resolve this
problem for our community. To resolve this problem, we require new public data
sets for defect prediction research. With such new data, we can improve our
evaluations to take the $n$-to-$m$ relationship into account and investigate the
severity of ignoring this aspect in the last decades.

\subsection{Relationship to confusion matrix based metrics}
\label{sec:confusion-matrix-metrics}

Our cost model is not unrelated to other performance measures. Our simulations
already show that the boundaries are correlated with traditional measures like
\textit{precision} and \textit{recall}. However, these correlations are not linear. For the constant cost model with 1-to-1
relationship between artifacts and defects and assuming perfect quality
assurance, the lower boundary is actually the inverse of \textit{precision}. The
\textit{recall} has no direct relationship with the cost boundaries.
Due to the non-linearity of the relationship, an increase of, e.g., 10\%
\textit{recall} does not mean that the costs are reduced by 10\% or the 
boundaries change by 10\%. For example, Figure~\ref{fig:rep-results}(a) shows
that if the recall for falcon is increased from 80\% to 90\%, there is a
superlinear change of the upper boundary and a roughly linear change of the
lower boundary. Thus, using such metrics for the comparison of models is - to
some degree - a viable solution for the comparison of defect prediction models,
even though the impact of the difference in performance metrics on costs may be
misleading. However, our results also clearly show, that if only such metrics
are considered, the crucial aspect of whether a prediction model can actually
save costs is neglected.

\subsection{Criteria for successful defect prediction}
\label{sec:thresholds}

We consider defect prediction as successful if it can save costs. Our
results show that there is no definitive value for metrics like
\textit{precision} and \textit{recall} where defect prediction can be cost
saving and that this is project dependent. For many projects we considered,
a precision of less than 25\% was sufficient for cost savings. Thus, criteria
that define the success defect prediction using performance metrics like
\textit{recall}, \textit{precision}, and \textit{accuracy} are misleading (e.g.,
\cite{Zimmermann2009, He2012, Herbold2017b}). Using the boundary conditions of
our cost model gives a hard criterion that is required for defect prediction to
be successful, i.e., that the lower boundary must be less than the upper
boundary.

\subsection{The impact of the size}
The trends for the size aware and constant cost model are almost the
same, the difference is only the value for the cost ratio $C$, which is
roughly shifted by multiplying the mean \ac{LOC} of each project. This change
is expected, as $C$ models the relation between quality assurance costs per
complete artifacts and defects for the constant cost models, and the relation
between the quality assurance costs per lines of code and defects for the size
aware cost model. From the literature, we would have assumed that the
differences between the size aware and the constant cost model are larger, e.g,
because of the work by Rahman\etal~\cite{Rahman2012}. However, our results
indicate that while the size has an effect, the overall trend of how the
boundaries behave is the same for the constant cost model and the size aware
cost model. However, the reason for this lack of a stronger effect may be due to
our randomized simulation of defect prediction models. Since size is often a
strong predictor in defect prediction models, sometimes even outperforming
machine learning~\cite{Zhou2018}, the results may change if prediction models that
consider the size during predictions, are used. 

\subsection{Boundary conditions are required but not sufficient}
\label{sec:impact-exec-costs}
While our boundary conditions are required for a
positive profit they are not sufficient. Thus, even if the cost ratio of
quality assurance efforts and defects of a project is within the interval of
cost-saving cost ratios for a defect prediction model as defined by
Corollary~\ref{thm:boundaries-init}, it is possible that there is no positive
profit. The reason for this is our assumption that the one-time and
continuous costs are zero for our initializations. While we believe that these
costs are relatively small (see Section~\ref{sec:one-time-costs}), the interval
between the upper and lower boundary decreases, as is shown by the
corrolaries~\ref{thm:boundaries-lowerboundary} and
\ref{thm:boundaries-upperboundary}. Thus, if the actual cost ratio is very close
to the boundaries, the actual profit could still be negative. 

Additionally, our approach only considers success from the management
perspective, i.e., related only to monetary issues. This does not cover whether software
developers would actually use the defect prediction model as expected and what
would be required to achieve this, as such considerations are out of scope of
this article.

\subsection{Using the cost model}
\label{sec:using-the-cost-model}

Our work demonstrates the need for cost modeling of defect prediction models
that goes beyond the standard measurements of prediction performance like
\textit{precision} and \textit{recall}, because these measures are not directly
related to the actual cost-saving potential of a defect prediction model. For
the adoption of our cost model for future research, we propose the following.

\textbf{1) Adopt the $n$-to-$m$ relationships.} Our findings clearly show that
the results of the realistic $n$-to-$m$ relationships between artifacts and defects
deviate from the simple 1-to-1 and 1-to-n. Therefore either the
size-aware or constant cost $n$-to-$m$ model should be used.

\textbf{2) Evaluate if the prediction model is cost saving.} All papers should
evaluate if the lower bound is really less than the upper bound. If this is not
the case, the defect prediction cannot save costs in comparison to one of the
trivial baselines of either predicting nothing, or predicting everything. 

\textbf{3) Compare the upper and lower boundaries.} In case different defect
prediction models are compared to each other, the upper and lower boundaries are valuable
metrics. We suggest that the boundaries are used instead of \textit{precision}
and \textit{recall} for the comparison of models, as they give a more realistic
picture on the actual performance of the prediction model in a realistic
setting, because the relationship of the costs with \textit{precision} and \textit{recall} 
is not linear~\ref{sec:confusion-matrix-metrics}.
Additionally, project managers can use the upper and lower boundaries
to estimate if the defect prediction model can be cost saving in their
project. Based on their intuition of what defects costs for the
project, they can evaluate if the project/organization is within
the cost saving area of the defect prediction model. The advantage of the
boundaries is that project managers do not need to have exact estimates for the
costs. If they can estimate a range in which the costs are, this is sufficient
to evaluate if the defect prediction model can be cost saving. 

\textbf{4) Compare the range of cost saving ratios $C$.} In addition to the
comparison of the boundaries, the difference between the upper and the lower
boundary should also be considered. We suggest that this comparison replaces
performance measures like the $F-Measure$ as a large range of cost efficient
values indicates that the model performs well under different circumstances.
Moreover, the farther the actual cost ratio of a project is away from the
boundaries, the higher the profit. Thus, defect
prediction models with larger ranges between the boundaries are not only
cost-savings for more projects, but can also save more costs. 

\textbf{5) Evaluate for perfect and imperfect quality assurance.} Our results
show that the cost efficiency of defect prediction models changes with the
effectiveness of the applied quality assurance to reveal predicted defects. We
suggest that both perfect (i.e., $p_{qf}=0$) as well as imperfect (e.g.
$p_{qf}=0.5$) quality assurance is considered for evaluations.

\subsection{Possible Improvements}

While we believe that our general cost model covers the most important factors
and the the initializions are based on reasonable assumptions, there may be
opportunities to further improve the model. 

On the one hand, the cost model could be initialized differently. In this case,
Theorem~\ref{thm:boundaries-random} and
corollaries~\ref{thm:boundaries-lowerboundary} and
\ref{thm:boundaries-upperboundary} would not be affected. Then researchers could
take pattern from Corrolory~\ref{thm:boundaries-init} and calculate the cost
boundaries for the other initializations. For example, the costs for the quality
assurance could be estimated in relation to the complexity of artifacts instead
of the size, or even a combination of both. This way, the cost estimation could
possibly better account for the impact of developer experience on the quality
assurance costs.

On the other hand, there are several ways in which the general cost model may be
extended. One possible extension would be to replace the constants $C_{EXEC}$ for continuous
costs, $C_{QA}$ for the quality assurance costs per unit, and $C_{DEF}$ for the
costs per defect with random variables. All three constants represent the mean
costs that would occur. Consequently, these constants ignore the uncertainty in
the probability distribution of these costs. A more realistic model would be to
replace these constants with random variables that represent the probability
distribution of these costs. This way, the cost model could account for
randomly occurring continuous costs like the change of team members, respect that
the quality assurance costs may vary per unit, and incorporate a more realistic
model for the costs of defects. Mathematically, the current constants would be
the expected value of the random variables, and, consequently, the cost
boundaries we calculated would represent the expected cost boundaries. The
random variables would enable a mathematical analysis of the uncertainty of
these cost boundaries. However, a necessary precursor is that the probability
distributions of these costs would be known, as the uncertainty of the costs
could not be expressed otherwise. 

Another possible extension is to incorporate additional cost factors into the
cost model, e.g., drawbacks and benefits of the approach that are not directly
associated to costs. For example, prediction models with a low
\textit{precision} may still be cost efficient, but they could potentially also
be frustrating for developers due to the large amount of false positives. On the
other hand, the developers would gain more experience with the artifacts that
are false positives and they may also develop more (automated) tests for these
artifacts. Thus, there would be an indirect gain in experience and possible
future cost savings due to the larger test suite. Such cost terms could be added
to the general model in Equation~\eqref{eq:cost1}. As a consequence,
Theorem~\ref{thm:boundaries-random} would have to account for these cost terms,
which may modify the cost boundaries. A similar approach would be to subdivide
the existing cost factors. E.g., the quality assurance costs could be divided
into the costs due a potentially delayed release and the costs for additional
man power required for the quality assurance. Alternatively, the cost model
could be used as is, but instantiated twice: once with only man power
considerations, once with only time considerations. This would allow managers to
evaluate if the prediction model may be more expensive in terms of man power,
but effective in terms of loss due to a delayed release.

A further possible extension of the cost model is to subdivide the cost
factors, e.g., to subdivide the costs of defects into costs for security issues,
costs for defects that crash the application, and costs for other defects.
However, this extension is incompatible  with the mathematical analysis we
conducted in this article. We were only able to proof the boundary conductions
because we could reduce the number of unknown variables to one by considering
the costs for quality assurance and the costs for defects as a ratio. If we
would, e.g., subdivide the costs of defects into three subcategories, we would
have three such ratios as unknown variables and would need to establish boundary
conditions for these variables that would not be independent of each other. Thus, we
do not believe that such an extension of this cost model is feasible.\footnote{We note
that extensions with additional cost terms may lead to the similar problems with
the mathemical analysis.} Regardless, if random variables are used
as we described at the beginning of this section, the distribution of the random
variable could account for the different types of defects, e.g., by modelling
the distribution of the cost of defects as a mixture of the distributions of the
costs of security issues, costs of crashes, and costs of other defects.

\subsection{Threats to Validity}
There are several threats to the validity of our work, which we report following
the classification by Wohlin\etal~\cite{Wohlin2012}.

\subsubsection{Construct Validity}
The evaluation of the boundary conditions through simulation of defect
predictions may be unsuited for the evaluation of trends. We mitigated this by
simulating the prediction an real-world data and sampling accross a large range
of prediction performances. There may be a defect in the code for the
simulations of defect prediction and the evaluation of them using our cost model
for the experiments performed in Section~\ref{sec:experiments}. However, the
source code is relatively short and not very complex. Moreover, we reviewed the
source code to minimize this possibility.

\subsubsection{Internal Validity}
\label{sec:internal-threats}

There may be important factors influencing the costs of defect predictions,
which we did not include in the general model. We scanned the literature
regarding related work to costs of defect predictions to mitigate this threat.
Moreover, we initialized the cost model
using different assumptions. These assumptions may be unrealistic or wrong,
leading to wrong conclusions. We explained the rationale behind each
design decision and, if possible, grounded them in prior work from the
literature to mitigate this threat.

We have only presented the results of the trends of the cost model with respect
to \textit{precision} and \textit{recall}, because these are the most common
metrics used for defect prediction research. These trends may look different
using other metrics. However, other common metrics are either directly or
indirectly related to \textit{precision} and \textit{recall}, e.g., the
\textit{F-Measure}, \textit{G-Measure}, or \textit{AUC}. This mitigates the
threat that our conclusions, especially regarding the impact of the $n$-to-$m$
relationship may be wrong.

The collected data may also contain problems we have not addressed causing noise
in the data. Commits may reference multiple issues, which could lead to double
counting of files for defects. However, only 46 of the 1493 commits that address
defects reference multiple issues, i.e., the effect of this would be very small. 

Moreover, we have not manually validated that all changes within a commit that
fixes a defect are really part of the correction of the defect. This means that
our data may have an inflated number of files per defect, which could bias the
results towards showing differences between the $n$-to-$m$ cost model and the
1-to-1 and the 1-to-$m$ cost model. To mitigate this threat, we cross-checked our ratio of files per
defect $mean(|d|)$ with the results by Mills\etal~\cite{Mills2018}, who manually
validated the file actions. Based on the results by Mills\etal,
$mean(|d|)$ falls into the interval $[1.11, 2.13]$ with 99.5\% confidence. Our
data has a mean value of 1.56. Thus, even if there are false positive file
actions in our data, the results from the experiment should still be
representative. 

Additionally, we took a closer look at the projects tika and zookeeper, as
Mills\etal~\cite{Mills2018} manually validated data from these projects,
although from a different time period than our data. We evaluated how many
changes to files Mills\etal{} manually determined as part of the
correction of a defect and compared this to the number of of changes to files
that we identify with our heuristic for the removal of false positive file
changes based on changes to whitespaces, comments, and tests. For tika, Mills et
al. identified 22 changes to files for the correction of defects and 27 changes
to files that were unrelated to defects. We correctly filtered 22 of the 27
unrelated changes. For zookeeper, Mills\etal{} identified 114 changes to files
for the correction of defects and 126 unrelated changes. We correctly filtered
98 of the 126 unrelated changes. Overall, we filtered 78\% of the unrelated
changes. Thus, we removed most of the noise from our data. Consequently, it is highly
unlikely that the remaining noise is strong enough to alter our results to such
a degree that no differences due to the $n$-to-$m$ relationship would be
visible anymore. 

\subsubsection{External Validity}

While we evaluated our cost model on fifteen real-world projects, we only used
simulation and did not use actual defect prediction models. From our results, we
extrapolate that our cost model is required and can affect the results of
evaluations for all real-world project where a subset of the defects affect
multiple files, due to the $n$-to-$m$ relationship. However, we cannot
definitively conclude this.

\subsubsection{Reliability}

The filtering of issues whether they are really defects or not may affect the
results of this article and depends on the author. To mitigate this threat to
the reliability, we followed the same rules for defects as
Herzig\etal~\cite{Herzig2013} and documented all decisions in the replication
kit. Due to the results of Herzig\etal~\cite{Herzig2013}, we expected to discard between
27.4\% and 42.9\% of the issues with 99.5\% confidence. Thus, the
$\frac{175}{413}=42.4\%$ discarded issues are within the bounds established by
the state of the art, indicating that this study is reliable.
\section{Conclusion}
\label{sec:conclusion}

In this article, we specified a cost model for software defect prediction,
showed how the cost model can be used to calculate the profit of defect
prediction, and defined mathematically provable boundaries that defect prediction
must fulfill in order to allow for a positive profit under any circumstances.
We have shown how our cost model can be initialized using different assumptions.
Using simulated defect prediction data, we have analyzed the impact of the
assumptions on the costs.
Using these insights, we provide guidelines for using our cost model in future
research.  Moreover, we discovered a flaw in all current defect prediction data
sets and consequently also all evaluations of defect prediction approaches due
to an oversimplification of the relationship between software artifacts and
defects.

In future work, we will apply our cost model to the state of the art of defect
prediction and assess under which conditions the predictions are successful
and compare defect prediction models with respect to their cost-saving
potential. However, before such an analysis is possible, we will work on
creating a new defect prediction data set that allows for evaluations that
respect the $n$-to-$m$ relationship between software artifacts and defects as our
simulations show that cost estimation are very different if this aspect is not
considered. Another important aspect of future work are measures for the
acceptance of defect prediction models by develepors. While the profit is an
imporant indicator for success of a technique from the management perspective,
tools that apply defect prediction must be used by developers. For example, our results show that a
positive profit can sometimes be achieved even with a low \textit{precision},
i.e., many false positives. However, whether developers would accept this and
under which circumstances they may accept a high number of false positives has
not yet been sufficiently addressed in the literature.

\section*{Acknowledgements}

This work is partially funded by DFG Grant 402774445.



%

\ifCLASSOPTIONcaptionsoff
  \newpage
\fi

\bibliographystyle{IEEEtran}
\bibliography{literature}

\begin{IEEEbiography}[{\includegraphics[width=1in,height=1.25in,clip,keepaspectratio]{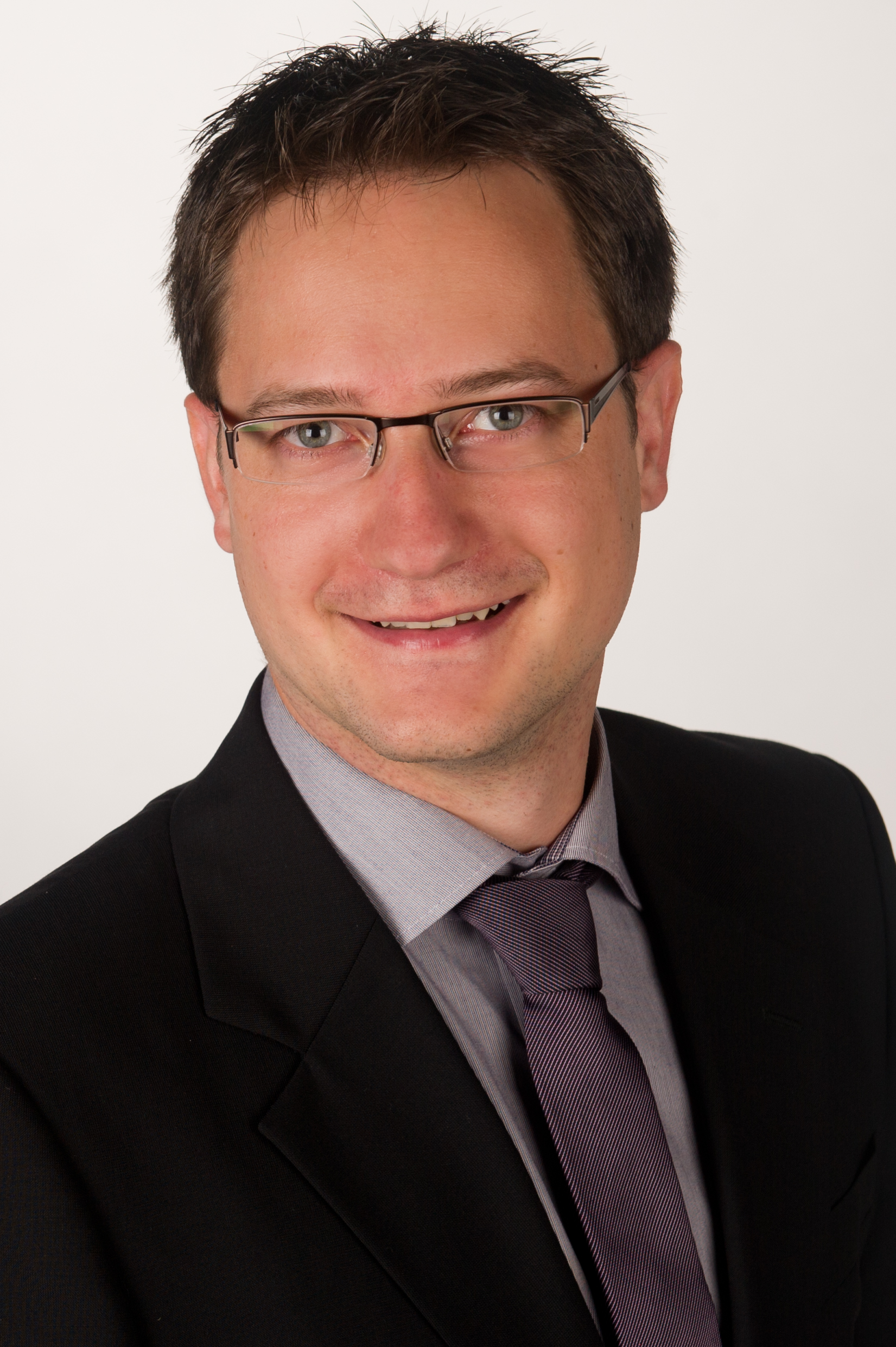}}]{Steffen
Herbold} PD Dr. Steffen Herbold currently manages a Machine Learning professorship at the
Institute of Computer Science of the Georg-August-Universit\"{a}t
G\"{o}ttingen. His research is focused on the application of data science
methods and their applications in software engineering as well as software
engineering to data science methods.
\end{IEEEbiography}

\end{document}